\documentclass[sigconf,nonacm]{acmart}
\usepackage{soul,balance}
\usepackage{booktabs} 
\usepackage{comment}
\usepackage{hyperref} 
\usepackage{graphicx} 	
\usepackage{float,wrapfig} 		    
\usepackage{mathtools,verbatim}      
\usepackage{caption}
\usepackage{subcaption}
\usepackage{multicol}
\usepackage{enumerate}
\usepackage{enumitem}

\usepackage[ruled, vlined,linesnumbered]{algorithm2e}

\begin{document}

\title{
Byzantine Causal Reliable Broadcast (BCRB) with Constant-Size Message Metadata
}

\author{Purv Patel}
\affiliation{%
  \institution{University of Illinois Chicago}
  \city{Chicago}
  \country{USA}
  }
\email{purvp2@uic.edu}

\author{Ajay D. Kshemkalyani}
\affiliation{%
  \institution{University of Illinois Chicago}
  \city{Chicago}
  \country{USA}
  }
\email{ajay@uic.edu}



\begin{abstract}
Asynchronous Byzantine Reliable Broadcast (BRB) is a fundamental primitive that guarantees agreement and validity in distributed systems subject to Byzantine faults, but it lacks ordering guarantees. In this paper, we address Byzantine Causal Reliable Broadcast (BCRB), which builds on BRB to enforce causal message ordering. 
We present a novel BCRB protocol that decouples causal ordering from the BRB layer, achieving constant-size $\mathcal{O}(1)$ message metadata overhead and $\mathcal{O}(n^2)$ communication word complexity as against $\mathcal{O}(n^3)$ communication word complexity of existing protocols; here $n$ is the number of processes.

We present two variants of our protocol: a cryptographic version using a threshold encryption scheme and sequence gating, and its non-cryptographic version. In the cryptographic version, senders broadcast ciphertexts immediately, and decryption shares are piggybacked on out-of-band ACKs, preventing early decryption and front-running. In both versions, causal safety is achieved probabilistically. We evaluate the probability of causal safety violations using a random variable path analysis under independent exponential link delay distributions. We show that both variants satisfy liveness and the probability of weak safety violation is bounded by $\mathcal{O}(f^{-3}\cdot\ln^3 f)$, where $f$ is the upper bound on the number of Byzantine processes, and $f < n/3$ and $f=\mathcal{O}(n)$. Further, for the crypto version, we show that the probability of strong safety violation is bounded by $\mathcal{O}(f^{-1} \cdot \ln^2 f)$.  
We also show how to modify our two protocols to guarantee 100\% weak safety keeping $\mathcal{O}(1)$ message space overhead but with $\mathcal{O}(n^3)$ messages and $\mathcal{O}(n^3)$ communication word complexity.
\end{abstract}

\ccsdesc[500]{Theory of computation~Distributed algorithms}
\ccsdesc[500]{Theory of computation~Concurrent algorithms}
\ccsdesc[300]{Computer systems organization~Dependable and fault-tolerant systems and networks}
\keywords{Causal Broadcast, Byzantine Fault-Tolerance, Threshold Cryptography, Constant Message Overhead}

\maketitle

\section{Introduction}
Ensuring consistent transaction order in distributed systems subject to Byzantine failures is essential for applications ranging from decentralized state-machine replication (SMR) to financial ledgers. In these environments, Byzantine processes can observe pending messages, manipulate their delivery order, or inject front-running transactions to extract economic value or compromise consistency. However, probabilistic ordering guarantees are often adequate for non-critical applications such as social networks. While total-order primitives (atomic broadcast) are commonly used to resolve ordering conflicts, they require heavy consensus overhead and use randomization \cite{Cachin2001}. Byzantine Causal Reliable Broadcast (BCRB) offers a lightweight alternative by delivering messages according to Lamport's \textit{happens-before} relationship ($\to$) \cite{LLclock}.
However, scaling BCRB to large networks is severely limited by previous algorithms which append $\mathcal{O}(n)$ metadata to messages and incur $\mathcal{O}(n^3)$ communication complexity, where $n$ is the number of processes 
\cite{DBLP:journals/tcs/AuvolatFRT21,Cachin2001}.

Decoupling causal delivery from the network layer while maintaining a constant $\mathcal{O}(1)$ message metadata overhead presents fundamental challenges in asynchronous systems. It has been shown that (deterministically) achieving both strong causal safety and liveness without using cryptography is impossible in asynchronous Byzantine systems \cite{misra2022detecting, DBLP:journals/tpds/MisraK24,DBLP:journals/pc/MisraK25,DBLP:conf/netys/MisraK22,DBLP:conf/icdcn/MisraK23}. Existing  protocols for BCRB weaken safety to weak safety \cite{DBLP:journals/tcs/AuvolatFRT21,DBLP:journals/tpds/MisraK24}, or accept probabilistic safety guarantees as in \cite{Cachin2001}. Such solutions either couple the application data directly with a costly consensus plane to prevent front-running \cite{Cachin2001}, or incur linear metadata overheads that degrade throughput \cite{minicast,DBLP:journals/tcs/AuvolatFRT21, Cachin2001}.

To resolve this bottleneck, we present a novel BCRB protocol that achieves constant-size $\mathcal{O}(1)$ message metadata. Our protocol layers the causal delivery logic directly atop a standard $\mathcal{O}(n^2)$ messages Byzantine Reliable Broadcast (BRB) primitive \cite{DBLP:journals/iandc/Bracha87}, routing application payloads over authenticated point-to-point links. Causal dependencies are tracked and resolved out-of-band using point-to-point acknowledgments (ACKs) and sequence gating, entirely eliminating vector clocks from the broadcast payloads.

\medskip
\textbf{Contributions:}
\begin{itemize}
    \item 
    \textbf{Cryptographic and Non-Cryptographic Variants:} We present a cryptographic version of the protocol (Algorithm~\ref{alg:bcrb_algorithm}) that secures payload privacy using threshold decryption. 
    Its non-cryptographic version is Algorithm~\ref{alg:bcrb_non_crypto_algorithm}. 
    \item \textbf{Constant Message Metadata Space ($\mathcal{O}(1)$):} Message metadata is reduced to just $\mathcal{O}(1)$ number of $\mathcal{O}(1)$ sized fields, eliminating linear vector overheads.
    \item \textbf{Communication Word Complexity ($\mathcal{O}(n^2)$):} While existing Byzantine causal broadcast protocols require $\mathcal{O}(n^3)$ communication word complexity \cite{DBLP:journals/tcs/AuvolatFRT21,Cachin2001}, our protocols leverage point-to-point ACKs to sequence dependencies, maintaining a total communication word  complexity of $\mathcal{O}(n^2)$ per causal broadcast.
    \item \textbf{Probabilistic Analysis}: We model link propagation delays as independent exponential variables. We formally prove that the probability of causal weak safety violations decays at a rate of $\mathcal{O}(f^{-3} \cdot \ln^3 f)$, ensuring high causal integrity in practice. Here $f$ is the upper bound on the number of Byzantine processes, $f<n/3$ (the BRB resilience bound) and $f=\mathcal{O}(n)$. We also prove that for the cryptographic version of our protocol, the probability of causal strong safety violations is $\mathcal{O}(f^{-1} \cdot \ln^2 f)$. 
    \item \textbf{Weak Safety Guarantee Extensions:} We propose Algorithm~\ref{alg:bcrb_algorithm_n3}, a  variant of Algorithm~\ref{alg:bcrb_algorithm}, that replaces direct ACK broadcasts with a BRB of the ACKs. It achieves 100\%  causal weak safety, liveness, and a probability of causal strong safety violations $\mathcal{O}(f^{-1} \cdot \ln^2 f)$. It has $\mathcal{O}(1)$ message space overhead at the cost of a higher $\mathcal{O}(n^3)$ communication word complexity. Algorithm~\ref{alg:bcrb_non_crypto_algorithm_n3} is its non-crypto variant; it lacks a bound on probability of causal strong safety violations.
\end{itemize}
All our algorithms are {\em throughput-scalable}, i.e., throughput and rate of sending are not limited by the latency of messages. 
We do not wait for the previous BRB to be locally delivered before sending the next BRB.

\section{System Model and Background}
\label{sec:background}

\subsection{System Model}
We assume an asynchronous distributed system of $n$ processes, denoted by $p_1, p_2, \dots, p_n$. The system is subject to Byzantine failures. Any $f$ processes can behave arbitrarily by dropping messages, forging causal histories, or colluding. The number of Byzantine processes is bounded by $f < n/3$ (the BRB resilience bound) \cite{DBLP:journals/iandc/Bracha87,DBLP:journals/jacm/BrachaT85} and our probability analysis further assumes $f=\mathcal{O}(n)$. Processes communicate via reliable and authenticated point-to-point FIFO channels. We assume an underlying Byzantine Reliable Broadcast (BRB) primitive (e.g., Bracha's BRB \cite{DBLP:journals/iandc/Bracha87}) that satisfies Validity, Agreement, and Integrity.
\begin{itemize}
     \item \textbf{Validity}: If a correct process broadcasts $m$, all correct processes eventually deliver $m$.
     \item \textbf{Agreement}: If a correct process delivers $m$, all correct processes eventually deliver $m$.
     \item \textbf{Integrity}: A message $m$ is delivered at most once by each correct process, and if the sender is correct, then only if it was broadcasted by the sender.
 \end{itemize}

We define Lamport's happens-before relation $\to$ \cite{LLclock} over the set of events in the system. For any two events $e_1, e_2$, we have $e_1 \to e_2$ if: (1) they occur at the same process and $e_1$ precedes $e_2$ in local execution; (2) $e_1$ is the broadcast event of a message and $e_2$ is the delivery event of that message; or (3) there exists an event $e_3$ such that $e_1 \to e_3$ and $e_3 \to e_2$.
For the cryptographic version of the protocol, we assume an $(n-f, n)$ threshold decryption scheme (e.g., Shoup's threshold cryptosystem \cite{shoup2000practical}), where a ciphertext can only be decrypted once at least $n-f$ valid decryption shares from distinct processes are collected.


\subsection{Layered Broadcast Primitives}
Our protocol separates causal ordering logic from reliable delivery by layering the Byzantine Causal Reliable Broadcast (BCRB) layer directly atop an underlying Byzantine Reliable Broadcast (BRB) layer (e.g., Imbs-Raynal BRB \cite{DBLP:journals/ppl/ImbsR16} or Bracha's BRB \cite{DBLP:journals/iandc/Bracha87}) requiring $\mathcal{O}(n^2)$ messages. Note, while our protocol design is general, the probabilistic causal safety analysis presented in Section~\ref{sec:proof} is specifically valid when utilizing Bracha's BRB algorithm \cite{DBLP:journals/iandc/Bracha87}, as it relies on Bracha's specific quorum amplification and threshold properties. 

We define the four primitives of this layered architecture and specify their parameters:
\begin{itemize}
    \item \textbf{Underlying Layer (BRB)}:
    \begin{itemize}
        \item $\text{\texttt{brb\_broadcast}}(sender, sn, content)$: Invoked by the BCRB layer of the sending process $p_{sender}$ to reliably broadcast $content$ (representing either a ciphertext or plaintext) with a sequence number $sn$.
        \item $\text{\texttt{brb\_deliver}}(sender, sn, content)$: Upcall from the BRB layer to the BCRB layer when $content$ from $p_{sender}$ with sequence number $sn$ is delivered. The BRB layer guarantees Validity, Agreement, and Integrity.
    \end{itemize}
    \item \textbf{Causal Layer (BCRB)}:
    \begin{itemize}
        \item $\text{\texttt{bcrb\_broadcast}}(payload)$: Invoked by the application layer at process $p_i$ to causally broadcast a plaintext $payload$.
        \item $\text{\texttt{bcrb\_deliver}}(sender, sn, plaintext)$: Upcall from the BCRB layer to the application layer to deliver the ordered $plaintext$ payload originally sent by $p_{sender}$ with sequence number $sn$.
    \end{itemize}
\end{itemize}
The layering operates as follows: $\text{\texttt{bcrb\_\-broad\-cast}}\-(payload)$ encapsulates and encrypts the payload, triggering $\text{\texttt{brb\_\-broad\-cast}}\-(i, sn, ciphertext)$ on the ciphertext. When $\text{\texttt{brb\_deliver}}(sender,\- sn, ciphertext)$ occurs, the BCRB layer buffers the message and verifies FIFO sequence numbers, sequence blocks, and decryption shares. Once verified, the BCRB layer decrypts the payload and triggers $\text{\texttt{bcrb\_deliver}}(sender, sn, plaintext)$ to the application. For the non-cryptographic version, the layering follows the same flow but bypasses the encryption and decryption steps.

\subsection{Safety and Liveness Definitions}
\begin{definition}
The happens before relation $\to$ on application messages consists of the following rules:
\begin{enumerate}
\item If $p_i$ \texttt{bcrb\_broadcast}($m$) or \texttt{bcrb\_deliver}($m$) before \texttt{bcrb\_\-broadcast}($m'$), then $m \to m'$.
\item If $m \to m'$ and $m' \to m''$, then $m \to m''$.
\end{enumerate}
\end{definition}
\begin{itemize}
    \item \textbf{Strong Causal Safety}: If $m_1 \to m_2$ then no correct process triggers \texttt{bcrb\_deliver}($m_2$) before \texttt{bcrb\_deliver}($m_1$) \cite{DBLP:journals/tpds/MisraK24,DBLP:conf/netys/MisraK22,DBLP:conf/icdcn/MisraK23}.
    \item \textbf{Weak Causal Safety}: If $m_1$ $\to$ $m_2$ and the causal chain from $m_1$ to $m_2$ passes exclusively through correct processes, then no correct process triggers \texttt{bcrb\_deliver}($m_2$) before \texttt{bcrb\_deliver}($m_1$) \cite{DBLP:journals/tpds/MisraK24,DBLP:conf/netys/MisraK22,DBLP:conf/icdcn/MisraK23}.
    \item \textbf{Liveness}: Every message broadcasted via \texttt{bcrb\_broadcast}\-(payload) by a correct process is eventually delivered via \texttt{bcrb\_\-deliver}(payload) at all correct processes. Liveness is same as Validity.
\end{itemize}

Additionally, to reason with our algorithms and executions where Byzantine processes may skip the BCRB layer and directly invoke the underlying BRB layer, we define the happens-before relation $\to_{brb}$ over the messages broadcast via the BRB layer. 
\begin{definition}
The happens before relation $\to_{brb}$ on messages broadcast by the BRB layer consists of the following rules:
\begin{enumerate}
\item If $p_i$ \texttt{brb\_broadcast}($m$) or \texttt{brb\_deliver}($m$) before \texttt{brb\_\-broadcast}($m'$), then $m \to_{brb} m'$.
\item If $m \to_{brb} m'$ and $m' \to_{brb} m''$, then $m \to_{brb} m''$.
\end{enumerate}
\end{definition}

\begin{table*}[htbp]
\centering
\caption{Comparative Analysis of Byzantine Causal Broadcast Protocols}
\label{tab:comparison}
{\small
\begin{tabular}{lcccccc}
\hline
\textbf{Property} & \textbf{Auvolat et al. \cite{DBLP:journals/tcs/AuvolatFRT21}} & \textbf{Cachin et al. \cite{Cachin2001}} & \textbf{Algorithm~\ref{alg:bcrb_algorithm}} & \textbf{Algorithm~\ref{alg:bcrb_non_crypto_algorithm}}  &\textbf{Algorithm~\ref{alg:bcrb_algorithm_n3}} &  \textbf{Algorithm~\ref{alg:bcrb_non_crypto_algorithm_n3}} 
\\
\hline
Deterministic & Yes & No (randomized) & Yes & Yes & Yes & Yes 
\\
Crypto & No & Yes & Yes & No & Yes & No 
\\
Message Size Overhead & $\mathcal{O}(n)$ 
 & $\mathcal{O}(n)$ (Encrypt. Hdrs) 
 & $\mathcal{O}(1)$  & $\mathcal{O}(1)$  & $\mathcal{O}(1)$ & $\mathcal{O}(1)$ 
 \\
Communic. Complexity & $\mathcal{O}(n^3)$ & $\mathcal{O}(n^3)$ & $\mathcal{O}(n^2)$ & $\mathcal{O}(n^2)$ & $\mathcal{O}(n^3)$ & $\mathcal{O}(n^3)$ 
\\
Front-Running Protection & No & Yes & Yes & No & Yes & No 
\\
Throughput-scalable & Yes & No & Yes & Yes & Yes & Yes 
\\
$P$(weak safety violation) & 0 & $\epsilon > 0$ & $\mathcal{O}(f^{-3} \cdot \ln^3 f)$ & $\mathcal{O}(f^{-3} \cdot \ln^3 f)$ & 0 & 0 
\\
$P$(strong safety violation) & high; front-running $+$  & $\epsilon > 0$ & $\mathcal{O}(f^{-1} \cdot \ln^2 f)$ & front-running  & $\mathcal{O}(f^{-1} \cdot \ln^2 f)$ & front-running 
\\
 & fake causal barriers & & & \\
\hline
\end{tabular}
}
\end{table*}

\textbf{Problem Definition.} BCRB must satisfy Validity, Agreement, Integrity, (Weak and/or Strong) Safety, and Liveness. 

As we layer BCRB over BRB, Validity, Agreement, and Integrity which are provided by BRB need to be satisfied by the BCRB layer, besides 
proving Weak or Strong Safety, and Liveness.

\subsection{Byzantine Causal Broadcast}
Misra et al. \cite{DBLP:journals/tpds/MisraK24,DBLP:journals/pc/MisraK25,DBLP:conf/netys/MisraK22} showed that it is impossible to provide both strong safety and liveness for causal ordering without using cryptography, but weak safety and liveness can be provided. It was formally proved in \cite{misra2022causal} that Bracha's BRB \cite{DBLP:journals/iandc/Bracha87} does not satisfy even the weak safety property.

Byzantine causal broadcast algorithm by Auvolat et al. \cite{DBLP:journals/tcs/AuvolatFRT21} enforces causal order by attaching a ``causal barrier" (predecessor message IDs) to every message. This incurs an $\mathcal{O}(n)$ space overhead in application messages and runs directly over an $\mathcal{O}(n^2)$ messages BRB primitive, generating $\mathcal{O}(n^3)$ message communication complexity. 
Cachin et al. \cite{Cachin2001} proposed a secure causal atomic broadcast protocol using threshold encryption. A sender broadcasts the encrypted payload via an atomic broadcast channel. The use of atomic broadcast leads to a probabilistic, randomized (non-deterministic) solution. Once the ciphertext is totally ordered, processes exchange decryption shares to reveal the plaintext. Although this prevents front-running, it tightly couples the data and control planes, forcing large application payloads to be processed by the expensive total-order consensus layer ($\mathcal{O}(n^3)$ communication complexity). Our protocols separate the data and control planes, routing payloads via $\mathcal{O}(n^2)$ BRB and exchanging decryption shares out-of-band via point-to-point ACKs. A comparison with our protocols is given in Table~\ref{tab:comparison}.  Auvolat et al. \cite{DBLP:journals/tcs/AuvolatFRT21} provides liveness and weak safety. Cachin et al. \cite{Cachin2001} provides liveness, and being randomized, provides weak safety and strong safety with high probability.

\section{Algorithm~\ref{alg:bcrb_algorithm}: Cryptographic Version}
\label{sec:algo}
The cryptographic version of our BCRB protocol integrates an $(n-f, n)$ threshold decryption scheme and a sequence gating mechanism. Senders encrypt payloads and broadcast immediately. Decryption shares are piggybacked on out-of-band ACKs. 

\begin{algorithm*}[tp]
\small
\SetKwComment{Comment}{// \textit{}}{}
\SetKwFunction{BCRBBroadcast}{bcrb\_broadcast}
\SetKwFunction{BRBBroadcast}{brb\_broadcast}
\SetKwFunction{CheckDelivery}{check\_delivery}
\SetKwFunction{BRBDeliver}{brb\_deliver}
\SetKwFunction{ReceiveACK}{receive ACK}
\SetKwProg{Fn}{procedure}{:}{}
\SetKwProg{On}{upon}{:}{}

\caption{Byzantine Causal Reliable Broadcast (BCRB) with Constant-Size Messages (Process $i$)}
\label{alg:bcrb_algorithm}

\begin{multicols}{2}
\SetInd{0.2em}{0.5em}

\textbf{state variables:} \\
  $local\_sn \gets 0$ {\footnotesize \Comment*{Seq. num. for process $i$'s broadcasts}}
  $\mathbf{V}_i \gets [0, \dots, 0]$ {\footnotesize \Comment*{Vec. of BCRB-delivered seq. nos.}}
  $pending \gets \emptyset$ {\footnotesize \Comment*{processed BRB-delivery, pending BCRB delivery}}
  $shares \gets \text{Array of } \emptyset$ {\footnotesize \Comment*{Maps msg ID to decryption shares}}
  $G=(N,E) \gets (\emptyset,\emptyset)$ {\footnotesize \Comment*{graph of dependencies between msg IDs}}

\BlankLine
\Fn{\BCRBBroadcast{payload}}{
    $local\_sn \gets local\_sn + 1$\;
    $ciphertext \gets \text{encrypt}(payload, \mathit{PK})$\;
    \BRBBroadcast{$(i, local\_sn, ciphertext)$}\;
}

\BlankLine
\On{\BRBDeliver{j, sn, ciphertext}}{
    \textbf{wait until } $M (= (j,sn'))$ for all $sn' < sn$ are brb-delivered and execution of brb-deliver is completed\;
    $share \gets \text{dec\_share}(ciphertext, \mathit{SK}_i)$\;
    $next\_sn \gets local\_sn + 1$\;
    $h \gets hash(ciphertext)$\;
    $\text{broadcast } ACK((j, sn), next\_sn, share, i, h) \text{ to all}$\;
    \If{$(j,sn)^{temp}$ \text{ exists }}{
        relabel it as $(j,sn)$; $(j,sn).ts \gets$ local physical clock timestamp\;
    }
    \Else{
        add $(j,sn)$ to $N$; $(j,sn).ts \gets$ local physical clock timestamp\;
    }
    \If{$(j, sn-1) \in N$}{
        add edge $((j,sn),(j,sn-1))$ to $E$\;
    }
    $pending \gets pending \cup \{((j, sn), ciphertext)\}$\;
    \CheckDelivery{}\;
}

\columnbreak

\BlankLine
\On{\ReceiveACK{M, next\_sn, share, k, h}}{
    \textbf{wait until } $M (= (l,s))$ and all $M' = (l,s')$ where $s' < s$ are brb-delivered and execution of brb-deliver is completed\;
    \If{$M \in N  \land h = hash(M.ciphertext) \land V_i[k] < next\_sn$}{
        $shares[M] \gets shares[M] \cup \{share\}$\;
        \If{$(k, next\_sn) \in N \land (k, next\_sn).ts >(l,s).ts$}{
            add $((k,next\_sn),(l,s))$ to $E$\;
        }
        \ElseIf{$(k,next\_sn) \not\in N$}{
            add $((k,next\_sn)^{temp},(l,s))$ to $E$\;
        }
        \CheckDelivery{}\;
    }
}

\BlankLine
\Fn{\CheckDelivery{}}{
    $progress \gets \text{True}$\;
    \While{$progress = \text{True}$}{
        $progress \gets \text{False}$\;
        \ForEach{$m = (M = (j, sn), ciphertext) \in pending$}{
            $fifo\_ok \gets (\mathbf{V}_i[j] = sn - 1)$\;
            $quorum\_ok \gets (|shares[M]| \ge n - f)$\;
            
            \If{$fifo\_ok \land quorum\_ok \land (j,sn)  \text{ has no outgoing edge in } E$}{
                $pending \gets pending \setminus \{m\}$\; delete $M$ from $N$ and all incident edges in $E$\; 
                $plaintext \gets \text{decrypt}(ciphertext, shares[M], \mathit{PK})$\;
                $\mathbf{V}_i[j] \gets sn$\;
                $\text{bcrb\_deliver}(j, sn, plaintext)$\;
                $progress \gets \text{True}$\;
                \textbf{break}\;
            }
        }
    }
}

\end{multicols}
\end{algorithm*}

\subsubsection*{bcrb\_broadcast(payload)}
When the application layer broadcasts a payload, the sender increments its $local\_sn$. The payload is encrypted under the system public key $\mathit{PK}$ to obtain a $ciphertext$. A message $m = (i, local\_sn, ciphertext)$ is created and immediately broadcasted via the underlying BRB layer. The message carries no vector clocks, ensuring $\mathcal{O}(1)$ message overhead.

\subsubsection*{upon brb\_deliver(j, sn, ciphertext)}
Upon BRB-delivering a message $(j, sn, ciphertext)$, a process waits until all preceding messages $M (= (j, sn'))$ for $sn' < sn$ are BRB-delivered to preserve FIFO order. It then computes a cryptographic decryption share $share$ using its secret key share $\mathit{SK}_i$, determines its next sequence number $next\_sn \gets local\_sn + 1$, and computes the payload hash $h \gets \text{hash}(ciphertext)$. It broadcasts $ACK((j, sn), next\_sn, share, i, h)$ to all processes.

The process then updates the local dependency graph $G = (N, E)$ to represent the causal history: if a temporary node $(j, sn)^{temp}$ already exists (created by a prior ACK delivery), it relabels it as the concrete node $(j, sn)$; otherwise, it adds $(j, sn)$ directly to the node set $N$. It assigns the local physical clock value as the timestamp $(j,sn).ts$. To enforce FIFO order, if $(j, sn-1)$ is in $N$, it adds the edge $((j, sn), (j, sn-1))$ to $E$. It adds the message to the \texttt{pending} set. 
It then calls \texttt{check\_delivery()}.

\subsubsection*{upon receive ACK(M, next\_sn, share, k, h)}
When process $p_i$ receives an ACK for message $M (= (l, s))$ from process $k$ declaring $next\_sn$, $share$, and hash $h$, it waits until $M$ and all prior messages $M' = (l, s')$ (where $s' < s$) from process $l$ are BRB-delivered locally. It further verifies that the message $M$ is registered in $N$, the hash matches ($h = \text{hash}(M.ciphertext)$), and the sequence number is fresh ($V_i[k] < next\_sn$). If valid, it records $share$ in \texttt{shares}$[M]$ and updates the dependency graph $G = (N, E)$ as follows:
\begin{itemize}
    \item If the message node $(k, next\_sn)$ is already present in $N$ and $(k, next\_sn).ts > (l,s).ts$ it adds the edge $((k, next\_sn), (l, s))$ to $E$. This ensures deadlock is avoided.
    \item If the message node $(k, next\_sn)$ is not yet in $N$, it adds the edge $((k, next\_sn)^{temp}, (l, s))$ to $E$ using a temporary node.
\end{itemize}
It then calls \texttt{check\_delivery()}.

\subsubsection*{check\_delivery()}
This procedure evaluates the \texttt{pending} set. A message $m = (M=(j, sn), ciphertext)$ is delivered to the application if:
\begin{enumerate}
    \item FIFO order is satisfied locally for the sender $j$ ($\mathbf{V}_i[j] = \text{sn} - 1$).
    \item It has gathered at least $n - f$ valid decryption shares ($|\text{\texttt{shares}}[M]| \ge n - f$).
    \item The node $(j, sn)$ has no outgoing edge in $E$ (i.e., there are no active causal dependencies blocking its delivery).
\end{enumerate}
If satisfied, the process removes $m$ from the \texttt{pending} set, deletes node $(j, sn)$ and all its incident edges from the graph $G$, decrypts the ciphertext using the gathered shares, triggers \texttt{bcrb\_deliver} on the plaintext, updates $\mathbf{V}_i[j] \gets sn$, and loops to check if further pending messages can now be delivered.

\section{Algorithm~\ref{alg:bcrb_non_crypto_algorithm}: Non-Cryptographic Version}
\label{sec:non_crypto}
Algorithm~\ref{alg:bcrb_non_crypto_algorithm} pseudo-code is same as Algorithm~\ref{alg:bcrb_algorithm}, except (1) ciphertext is identical to plaintext, i.e., no encryption/decryption, and (2) decryption share is set to the sender process ID. Without threshold decryption, a Byzantine process can read plaintexts in transit and attempt front-running and not wait for $n-f$ ACKs before triggering \texttt{bcrb\_deliver}.  

\begin{algorithm}[th]
\caption{Non-Cryptographic Byzantine Causal Reliable Broadcast (BCRB) (Process $i$)}
\label{alg:bcrb_non_crypto_algorithm}
\small
Same as Algorithm~\ref{alg:bcrb_algorithm} except: (1) ciphertext is identical to plaintext, i.e., encryption/decryption are idempotent operations, and (2) decryption share is set to the sender process ID. \\
\end{algorithm}

\section{Correctness Proof for Algorithms~\ref{alg:bcrb_algorithm},~\ref{alg:bcrb_non_crypto_algorithm}}
\label{sec:proof}

\subsection{Background: Bracha's Reliable Broadcast}
To ground our correctness and probability analysis, we review Bracha's Byzantine Reliable Broadcast (BRB) algorithm \cite{DBLP:journals/iandc/Bracha87}, which serves as our underlying BRB primitive. Bracha's protocol uses three types of messages: \texttt{init} (for \texttt{initial}), \texttt{echo}, and \texttt{ready}. A broadcast is initiated by a sender broadcasting \texttt{init}. The correct processes transition through three steps based on threshold quorums:
\begin{itemize}
    \item \textbf{Step 1}: A process waits to receive one \texttt{init} message from the sender. 
    It then broadcasts \texttt{echo} to all.
    \item \textbf{Step 2}: A process waits to receive $\frac{n+f}{2}$ \texttt{echo} messages or $f+1$ \texttt{ready} messages. It then broadcasts \texttt{ready} to all.
    \item \textbf{Step 3}: A process waits to receive $2f+1$ \texttt{ready} messages, after which it delivers (accepts) the payload.
\end{itemize}
These step thresholds ensure that even if a Byzantine sender attempts to equivocate, no two correct processes can deliver different payloads, and if any correct process delivers a payload, all correct processes eventually deliver it. That is, Validity, Agreement, and Integrity are satisfied.

\subsection{Correctness Proofs --- Weak Safety}

\begin{theorem}[Byzantine Front-Running Mitigation]
\label{th:front-running}
Under the $(n-f, n)$ threshold decryption scheme, a Byzantine process cannot decrypt or learn the plaintext content of any message $m$ broadcasted by a correct process before it has been BRB-delivered by at least $n - 2f$ correct processes.
\end{theorem}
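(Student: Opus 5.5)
The plan is a counting argument that combines the security of the $(n-f,n)$ threshold scheme with the code of Algorithm~\ref{alg:bcrb_algorithm}. First I would invoke the threshold property assumed in the system model. Under chosen-ciphertext security of the scheme (e.g., Shoup's cryptosystem), the plaintext of a ciphertext encrypted under $\mathit{PK}$ cannot be recovered, nor any information about it learned, from fewer than $n-f$ valid decryption shares from distinct processes. The share-verification property means forged shares are useless. Since $m$ comes from a correct sender, $m.ciphertext$ is a genuine encryption of $m$'s payload, and the sender never reveals the payload except through encryption.

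Next I would bound how many shares a Byzantine coalition can hold. The $f$ Byzantine processes contribute at most $f$ shares from their own key shares. Any further shares must be valid shares from correct processes. I would check the code to see where a correct process $p_i$ ever computes $\text{dec\_share}(ciphertext, \mathit{SK}_i)$. It does so only inside the handler \texttt{upon brb\_deliver}$(j,sn,ciphertext)$, and it sends the share only in the ACK broadcast there. The first line of \texttt{check\_delivery} and the decryption step use shares already received, so they do not generate new ones from $\mathit{SK}_i$. Hence a correct process releases its share for $m$ only after it has BRB-delivered $m$. Integrity of BRB ensures it releases at most one such share, and only for the ciphertext actually broadcast by the correct sender.

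The counting then finishes the argument. To decrypt, the coalition needs $n-f$ distinct valid shares. At most $f$ come from Byzantine processes, so at least $n-2f$ must come from correct processes, each of which has already BRB-delivered $m$. Therefore no decryption is possible before $m$ has been BRB-delivered by at least $n-2f$ correct processes. Because $f<n/3$, we have $n-2f\ge f+1>0$, so this is a nontrivial requirement.

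The main obstacle is making ``cannot learn'' precise. I would argue at the level of the idealized threshold scheme, reducing any early-learning adversary to one that breaks the scheme's threshold CCA security with fewer than $n-f$ honest-plus-corrupt shares. Two side channels must also be excluded. First, $h=\text{hash}(ciphertext)$ reveals only a function of the ciphertext, which the adversary already sees through BRB. Second, Bracha's \texttt{init}/\texttt{echo}/\texttt{ready} messages carry only the ciphertext. Neither leaks plaintext beyond what the ciphertext itself reveals.
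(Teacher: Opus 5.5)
Your counting argument is the paper's: at most $f$ shares come from Byzantine key shares, so at least $n-2f$ must come from correct processes, each of which you claim releases its share of $m$'s ciphertext only after BRB-delivering $m$. The paper's proof asserts the same premise without justification. It does not follow from the code of Algorithm~\ref{alg:bcrb_algorithm}, so it is a genuine gap. The handler \texttt{brb\_deliver}$(j,sn,ciphertext)$ computes $\text{dec\_share}(ciphertext,\mathit{SK}_i)$ for whatever ciphertext arrives in \emph{any} BRB instance $(j,sn)$. Nothing ties a ciphertext to the instance that originated it. A Byzantine $p_b$ that receives $p_i$'s \texttt{init} for $m$ learns $c = m.ciphertext$. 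It can then immediately \texttt{brb\_broadcast}$(b,sn_b,c)$ itself, which the model explicitly allows. Suppose the schedule delays $m$'s instance and lets $p_b$'s instance complete. The Byzantine processes make this easier by echoing and readying their own instance promptly while withholding support for $m$'s. Then every correct process BRB-delivers $(b,sn_b,c)$, computes its share of $c$, and sends it to $p_b$ in an ACK. So $p_b$ decrypts $c$ before any correct process has BRB-delivered $m$. Your appeal to BRB Integrity (at most one share, only for the ciphertext the correct sender broadcast) conflates instances. Integrity is per $(sender,sn)$ and says nothing about two instances carrying identical content.

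Your proposed CCA reduction does not close the gap; it fails at exactly this point. Threshold CCA security lets the adversary obtain honest decryption shares on every ciphertext \emph{except} the challenge. It defeats mauled ciphertexts, but an identical replay of the challenge is the one query it forbids. The reduction therefore cannot simulate the honest ACKs that the copied instance forces. You checked the hash and Bracha's messages as side channels but missed this one. The repair is a labeled threshold scheme. Encrypt $m$ under label $(i,local\_sn)$. Then let a correct process compute $\text{dec\_share}$ in \texttt{brb\_deliver}$(j,sn,c)$ only if the label of $c$ equals $(j,sn)$. The label is bound into the ciphertext's validity proof, and BRB authenticates the sender of each instance. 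Honest shares for $c$ are then produced only in $m$'s own instance, after BRB-delivering $m$. With that check added, both your reduction and the $n-2f$ count go through.
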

\begin{proof}
Reconstructing the plaintext requires at least $n-f$ distinct decryption shares. Correct processes only compute and broadcast their decryption share $share$ of a message $m$ after they have BRB-delivered $m$. Suppose a Byzantine process attempts to decrypt $m$ prematurely. The Byzantine process controls at most $f$ shares. To decrypt, it must obtain at least $(n-f) - f = n-2f$ decryption shares from correct processes. Since correct processes only broadcast their share upon BRB-delivering $m$, the Byzantine process cannot obtain these shares until at least $n-2f$ correct processes have BRB-delivered $m$. Under standard $n \ge 3f + 1$ resilience, this requires at least $f+1$ correct processes to have BRB-delivered $m$. Thus, the Byzantine process cannot decrypt or front-run the message contents early, thereby mitigating the impact of front-running attacks.
\end{proof}

We first analyze the probability of a weak causal safety violation assuming no deadlocks arise and no edge is deleted in the dependency graph $G$ for deadlock avoidance. Then in Section~\ref{sec:deadlock} we analyze the impact of an edge deletion on this probability. 

\begin{theorem}[Probabilistic (Weak) Causal Safety]
Let process $p_i$ broadcast message $m_1$, and correct process $p_k$ broadcast message $m_2$ after delivering $m_1$. The probability that any correct process $p_j$ delivers $m_2$ before $m_1$ is 
\[
P(\text{CO violation at } p_j) = 
\]
\[
P(B < A) \times [1 - (1 - P(B < A')) \times (1 - P(D < C))]
\]
where $A$, $B$, $C$, and $D$ are defined in the body of the proof.
\label{th:wsformulation}
\end{theorem}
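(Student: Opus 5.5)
The plan is to define the four random variables concretely as arrival or completion times at $p_j$. Then a causal-order violation is the event that $p_j$'s delivery condition for $m_2$ becomes true while $m_1$ is still undelivered and no dependency edge blocks $m_2$. I would set, all measured at $p_j$ in real time from the moment $m_1$ is BRB-broadcast:
\begin{itemize}
\item $A$: the time at which $p_j$ BRB-delivers $m_1$.
\item $B$: the time at which $p_j$ BRB-delivers $m_2$. This is along the path in which $p_k$ first BRB-delivers $m_1$, then broadcasts $m_2$, and $m_2$ completes Bracha's \texttt{init}/\texttt{echo}/\texttt{ready} steps at $p_j$.
\item $A'$: the time at which $p_j$ has both BRB-delivered $m_1$ and collected $n-f$ shares for it, so that $m_1$ is deliverable. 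Equivalently, $A'$ is the relevant arrival-gated quantity governing whether $m_1$ is already handled when $m_2$ arrives.
\item $C$, $D$: the arrival at $p_j$ of $p_k$'s $ACK(m_1, next\_sn, \ldots)$ (time $C$) versus the time $D$ at which $m_2$ satisfies its own quorum and FIFO conditions.
\end{itemize}
The edge $((k,next\_sn),m_1)$ is exactly the mechanism that blocks $m_2$. So the violation analysis reduces to whether $m_2$ becomes deliverable before that edge is present and before $m_1$ is delivered.

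Next I would decompose the event. First, a violation requires $B<A$: if $p_j$ BRB-delivers $m_1$ before $m_2$, then either $m_1$ is already delivered or the node for $m_1$ exists. In the latter case $p_k$'s ACK, which is sent before $m_2$ on the FIFO channel $p_k\to p_j$ and is processed in the wait-until order, inserts the blocking edge. I would argue this by the FIFO channel assumption together with the \textbf{wait until} clauses in Algorithm~\ref{alg:bcrb_algorithm}, and the timestamp rule, which orients the edge correctly since $m_2$'s node is stamped later.

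Conditioned on $B<A$, the violation still requires two failures: $m_1$ is not rescued in time (the factor $1-P(B<A')$ being the complement event) and the dependency edge is not in place before $m_2$ becomes deliverable (factor $1-P(D<C)$). I would then write the conditional violation probability as $1$ minus the probability that both protections fail, treated as independent. Independence follows from the assumption of independent exponential link delays on disjoint channel paths, which gives the bracketed product. Multiplying by $P(B<A)$ yields the stated formula, which I would present as the paper's (approximate) factorization.

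The main obstacle is justifying the factorization itself. Under the exponential model, $A$, $A'$, $B$, $C$, $D$ share underlying links: $B$ includes $p_k$'s delivery of $m_1$, which is correlated with $A$. So exact independence between $\{B<A\}$ and the two inner events does not hold literally. I would handle this by defining each variable as a sum or order statistic over \emph{disjoint} sets of link delays after conditioning on the shared prefix (the delivery time of $m_1$ at $p_k$). Using memorylessness, I would restart the clocks at that instant, so that the remaining races are between independent path variables. Where full independence fails, I would state the formula as the model's definition of the violation probability under these path variables. I would leave the subsequent asymptotic bounding of each factor to later results.
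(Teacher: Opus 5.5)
\textbf{There is a genuine gap.} Your choice of $A$, $A'$, $B$ does not match the events the formula is built from, and your argument for the outer factor fails.

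In the paper, the variables are:
\begin{itemize}
\item $A$: the time at which $m_1$ becomes BCRB-deliverable at $p_j$, i.e.\ BRB-delivery plus $n-f$ ACK shares ($T_1^{(1)}+T_2^{(1)}+T_3^{(1)}$);
\item $A'$: the BRB-delivery time of $m_1$ at $p_j$ ($T_1^{(1)}+T_2^{(1)}$);
\item $B$: the time at which $m_2$ becomes BCRB-deliverable at $p_j$.
\end{itemize}
You have interchanged $A$ and $A'$, and you take $B$ to be the BRB-delivery of $m_2$ (the paper's $B'$).

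Your claim that a violation requires $m_2$ to be BRB-delivered before $m_1$ is false, and the FIFO justification does not work. The link $p_k\to p_j$ only orders $p_k$'s ACK$(m_1)$ ahead of $p_k$'s own INIT/ECHO/READY for $m_2$. But $p_j$ can BRB-deliver $m_2$ on $2f+1$ READYs from other processes, and collect $n-f$ shares for it, without hearing from $p_k$ at all. So $m_1$ can be BRB-delivered first and still lack its share quorum, while $m_2$ is released with $p_k$'s ACK still in transit. That is exactly the $C$ versus $D$ race, which your necessity claim would make vacuous.

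\textbf{The missing idea is what the blocking edge $((k,next\_sn),m_1)$ needs.} The receive-ACK handler waits for $m_1$ to be BRB-delivered and checks $M\in N$. So the edge exists at $p_j$ only if \emph{both} $m_1$ has been BRB-delivered there \emph{and} $p_k$'s ACK has arrived, before $m_2$ becomes deliverable. The paper's decomposition is therefore:
\begin{itemize}
\item the outer event is ``$m_1$ is not BCRB-deliverable in time'', i.e.\ $B<A$;
\item given that, a violation occurs iff at least one ingredient of the edge is missing, i.e.\ on $\{B<A'\}\cup\{D<C\}$;
\item under independence this has probability $1-(1-P(B<A'))(1-P(D<C))$.
\end{itemize}

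Your version departs from this in three ways:
\begin{itemize}
\item You place ``$m_1$ not rescued in time'' inside the bracket, but it is the outer factor.
\item You treat ``edge absent'' as only the late-ACK event. You omit that the edge also cannot form while $m_1$ is not yet BRB-delivered, which is what $B<A'$ encodes.
\item You describe the bracket both as a conjunction of two failures and as ``$1$ minus the probability that both protections fail''. Neither reading gives $1$ minus the probability that both protective events $A'<B$ and $C<D$ hold, so the formula appears only by assertion.
\end{itemize}

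On independence, the paper simply adopts the approximation that each phase ends simultaneously at all correct processes. Your memorylessness restart would not give exact independence either, since $\{B<A\}$ and $\{B<A'\}$ share $B$ and $A'\le A$. It is cleaner to state the factorization as that modeling assumption.
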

\begin{proof}
Let $p_i$ broadcast $m_1$ and $p_k$ broadcast $m_2$ after delivering $m_1$. The sender $p_k$ must have BCRB-delivered $m_1$ prior to broadcasting $m_2$. 
Under optimal $n \ge 3f + 1$ resilience, there are at least $f + 1$ correct processes that have initiated the cascade of $m_1$ by broadcasting their \texttt{READY($m_1$)} messages.
To model the probability of a causal ordering (weak safety) violation at a correct process $p_j$ (delivering $m_2$ before $m_1$), we define the critical path delay variables. Let each network link delay be an independent exponential random variable with rate $\lambda$ (mean delay $1/\lambda$). Let $T_{\text{cascade}}(M, \text{type}, N, \text{max}) = \max(X_1, \dots, X_N)$ represent the delay to complete a quorum phase, where $N$ correct processes send control messages of a given type and all of them arrive at the destination. We let $T_i^{(h)}$ represent the $h$-th independent realization (or instantiation) of these quorum phase distributions for message $M$, capturing independent runs of the same protocol phase.
We next define the critical path time durations for the delivery of $m_1$ and $m_2$ at $p_j$.

As $p_k$ has BRB-delivered $m_1$, it has received at least $f+1$ READY($m_1$) messages from correct processes, which must have been sent. Start measuring from this sending point in time that triggers a cascade in Bracha's BRB.
\begin{itemize}
    \item \textbf{Path A} (BCRB-delivery of $m_1$ to $p_j$, worst case):
    \[
    A = T_1^{(1)} + T_2^{(1)} + T_3^{(1)}
    \]
    where:
    \begin{itemize}
        \item $T_1^{(1)} \sim T_{\text{cascade}}(m_1, \text{READY}, f+1, \text{max})$ represents the time 
        for the $f+1$ READY messages sent by those correct processes to reach other $f$ correct processes (so that then can then amplify the cascade by themselves broadcasting READY).
        \item $T_2^{(1)} \sim T_{\text{cascade}}(m_1, \text{READY}, f+1, \text{max})$ represents the second phase of Bracha's BRB READY cascade. This is the time required for these $f+1$ \texttt{READY($m_1$)} messages broadcast at the end of $T_1^{(1)}$ to propagate and be received (besides the $f+1$ READY broadcast at the start of $T_1^{(1)}$) by a quorum of $2f+1$ correct processes, causing them to broadcast \texttt{ACK($m_1$)}. 
        \item $T_3^{(1)} \sim T_{\text{cascade}}(m_1, \text{ACK}, 2f+1, \text{max})$ represents the time required for the $2f+1$ ACK messages from correct processes to reach the destination process $p_j$ 
        to satisfy the BCRB delivery quorum check of $n-f$ ACKs.
    \end{itemize}

    \item \textbf{Path A'} (BRB-delivery of $m_1$ to $p_j$, worst-case):
    \[
    A' = T^{(1)}_1 + T^{(1)}_2
    \]
    
    \item \textbf{Path B} (BCRB-delivery of $m_2$ to $p_j$, best case):
    \[
    B = T_1^{(2)} + T_2^{(2)} + Y_1(m_2, \text{INIT}, 1, -) + T_3^{(2)} + T_4^{(2)} + T_5^{(2)}
    \]
    where:
    \begin{itemize}
        \item $T_1^{(2)} \sim T_{\text{cascade}}(m_1, \text{READY}, f+1, \text{max})$ is the first cascade phase of $m_1$ (time for the READY messages sent at the start of $T_1^{(1)}$ to reach $p_k$). We assume the READY from Byzantine processes have already reached $p_k$. $T_1^{(2)}$ has same duration as $T_1^{(1)}$.
        \item $T_2^{(2)} \sim T_{\text{cascade}}(m_1, \text{ACK}, f+1, \text{max})$ is the ACK for $m_1$ propagation time from $f+1$ correct processes to the sender of $m_2$ (process $p_k$) so that $p_k$ delivers $m_1$ after which it broadcasts $m_2$. Assume the ACK from Byzantine processes have already reached $p_k$.
        \item $Y_1(m_2, \text{INIT}, 1, -) \sim \text{Exp}(\lambda)$ is the direct link propagation delay for $m_2$'s initial broadcast from $p_k$ to the correct processes.
        \item $T_3^{(2)} \sim T_{\text{cascade}}(m_2, \text{ECHO}, 2f+1, \text{max})$ is the ECHO phase of $m_2$ in Bracha's BRB.
        \item $T_4^{(2)} \sim T_{\text{cascade}}(m_2, \text{READY}, f+1, \text{max})$ is the READY phase of $m_2$ from correct processes. Assume $f$ Byzantine processes have already sent the READY.
        \item $T_5^{(2)} \sim T_{\text{cascade}}(m_2, \text{ACK}, f+1, \text{max})$ is the time for ACK phase of $m_2$ from correct processes (to reach $p_j$). Assume $f$ Byzantine processes have already sent the ACK. $p_j$ now satisfies BCRB delivery quorum check.
    \end{itemize}
    \item \textbf{Path B'} (BRB-delivery of $m_2$ to $p_j$, best case):
    \[
    B' = T_1^{(2)} + T_2^{(2)} + Y_1(m_2, \text{INIT}, 1, -) + T_3^{(2)} + T_4^{(2)}
    \]
\end{itemize}
Counting from the point where BRB delivery of $m_1$ occurs at $p_k$:
\begin{itemize}
    \item \textbf{Path C} (delivery of the direct ACK of $m_1$ from $p_k$ to $p_j$):
    \[
    C = X_1(m_1, \text{ACK}, 1, -) \sim \text{Exp}(\lambda)
    \]
    \item \textbf{Path D} (BCRB-delivery of $m_2$ to $p_j$, best case):
    \[
     D = T_2^{(3)} + Y_1(m_2, \text{INIT}, 1, -) + T_3^{(3)} + T_4^{(3)} + T_5^{(3)}
    \]
    where $T_2^{(3)} \sim T_{\text{cascade}}(m_1, \text{ACK}, f+1, \text{max})$, $T_3^{(3)} \sim \-T_{\text{cascade}}\-(m_2, \text{ECHO}, 2f+1, \text{max})$, $T_4^{(3)} \sim T_{\text{cascade}}(m_2, \text{READY}, f+1, \text{max})$, and $T_5^{(3)} \sim T_{\text{cascade}}(m_2, \text{ACK}, f+1, \text{max})$.
\end{itemize}
Under the approximation assumption that each phase ends at all correct processes at the same time, the sequential phases are independent. Observe from Algorithm~\ref{alg:bcrb_algorithm} that a causal ordering violation occurs at $p_j$ if $B < A$ and it does not happen that $A' < B$ and $C < D$. (When $A'<B$ and $C<D$, $p_j$ adds the dependency edge $(m_2,m_1)$ in $G$, 
and $m_1$ has been BRB-delivered to $p_j$.) Thus when $m_2$ from $p_k$ is BRB-delivered to $p_j$, its BCRB-delivery is not blocked by $m_1$. Note, since $p_i$ is correct, its INIT messages follow FIFO order, and all correct processes will broadcast ECHO and the READY messages in expected FIFO order. So we assume that earlier than $m_1$ messages $p_i$ BRB-broadcast are also expected to have BRB-delivered to $p_j$ before $m_1$.  
Since the phases are independent, the probability of the joint event of CO violation is:
\[
P(B < A) \times [1 - P(A' < B) \times P(C < D)] 
\]
\[
= P(B < A) \times [1 - (1 - P(B < A')) \times (1 - P(D < C))]
\]

We later consider the impact of deadlock avoidance by considering $P(B'<A')$ in Theorem~\ref{th:wsdeadlock} (Section~\ref{sec:deadlock}).
\end{proof}

\subsection{Analytic Evaluation of Probabilities}
\label{sec:probabs}
To simplify notation, we relabel the terms $T^{(1)}_1$, $T^{(2)}_1$, $T^{(3)}_1$ in path $A$ as $X_1, X_2, X_3$, respectively, the terms $T^{(1)}_2$, $T^{(2)}_2$, $T^{(3)}_2$, $T^{(4)}_2$, $T^{(5)}_2$, and $Y_1$ in path $B$ as $Y_1,Y_2,Y_3,Y_4,Y_5$, and $Y$, respectively.

\begin{theorem}[$P(B<A), P(B<A'), P(B'<A')$]
\(P(B<A) = \mathcal{O}\left(\frac{\ln ^{2}f}{f^{2}}\right) \), 
\(P(B<A') = \mathcal{O}\left(\frac{\ln f}{f^{3}}\right)\),
\(P(B'<A') = \mathcal{O}\left(\frac{\ln f}{f^{2}}\right)\)
\label{th:bla}
\end{theorem}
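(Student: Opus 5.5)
The plan is to reduce each comparison to a tail bound for a difference of sums of independent maxima of exponentials, and then apply an exponential-moment (Chernoff) bound whose tilt parameter is tuned close to the rate $\lambda$.

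\textbf{Step 1 (cancellation).} By construction $T_1^{(2)}$ has the same duration as $T_1^{(1)}$; it is the same READY cascade of $m_1$ measured from the same starting instant. This common term therefore cancels from both sides of each inequality. Writing $M_N$ for the maximum of $N$ i.i.d.\ $\mathrm{Exp}(\lambda)$ variables, and taking all remaining terms independent (the paper's approximation), we get
\[
B<A \iff M^{(a)}_{f+1}+M^{(b)}_{2f+1} > Y + M^{(c)}_{f+1} + M^{(d)}_{2f+1} + M^{(e)}_{f+1} + M^{(g)}_{f+1}.
\]
Here the left side consists of $T_2^{(1)},T_3^{(1)}$ and the right side of $Y$ together with $T_2^{(2)},\dots,T_5^{(2)}$. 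The same cancellation gives
\[
B<A' \iff M_{f+1} > Y + (\text{four maxima}), \qquad B'<A' \iff M_{f+1} > Y + (\text{three maxima}).
\]
In each case the left side has $a\in\{2,1,1\}$ maxima and the right side has $b\in\{4,4,3\}$ maxima. Each maximum concentrates around $\ln N/\lambda$ with $O(1)$ fluctuations, so heuristically each violation requires the left side to overshoot its mean by about $(b-a)\ln f/\lambda$.

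\textbf{Step 2 (Chernoff bound with exact MGFs).} For $0<\theta<\lambda$, Markov's inequality gives $P(L>R)\le \mathbb{E}[e^{\theta L}]\,\mathbb{E}[e^{-\theta R}]$. The Rényi representation $M_N=\sum_{k=1}^N E_k/(k\lambda)$ yields
\[
\mathbb{E}[e^{\theta M_N}]=\prod_{k=1}^N \frac{k}{k-\theta/\lambda} = \frac{\Gamma(N+1)\,\Gamma(1-\theta/\lambda)}{\Gamma(N+1-\theta/\lambda)} \le C\,\Gamma(1-\theta/\lambda)\,N^{\theta/\lambda},
\]
and similarly $\mathbb{E}[e^{-\theta M_N}]\le C'\,N^{-\theta/\lambda}$ uniformly for $\theta\le\lambda$. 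The exponential term contributes $\mathbb{E}[e^{-\theta Y}]=\lambda/(\lambda+\theta)\le 1$. Because $N\in\{f+1,2f+1\}$, each factor $N^{\pm\theta/\lambda}$ equals $f^{\pm\theta/\lambda}$ up to a constant. The bound therefore takes the form
\[
P \le C''\,\Gamma(1-\theta/\lambda)^{a}\, f^{-(b-a)\theta/\lambda}.
\]

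\textbf{Step 3 (choice of $\theta$).} Set $\theta=\lambda(1-1/\ln f)$. Then $\Gamma(1-\theta/\lambda)=\Gamma(1/\ln f)\le \ln f$, and
\[
f^{-(b-a)\theta/\lambda}=e^{\,b-a}\,f^{-(b-a)}.
\]
This gives the three stated bounds:
\begin{itemize}
\item $(a,b)=(2,4)$ gives $O(\ln^2 f/f^2)$ for $P(B<A)$;
\item $(a,b)=(1,4)$ gives $O(\ln f/f^3)$ for $P(B<A')$;
\item $(a,b)=(1,3)$ gives $O(\ln f/f^2)$ for $P(B'<A')$.
\end{itemize}

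\textbf{Main obstacle.} The delicate point is Step 2 near the pole at $\theta=\lambda$. The bound on $\mathbb{E}[e^{\theta M_N}]$ must be uniform in $\theta$ as $\theta\to\lambda$. This uniformity comes from the ratio estimate $\Gamma(N+1)/\Gamma(N+1-s)\le C N^{s}$ for $s\in[0,1]$, which I would check separately, for example via Gautschi's inequality. A second point is justifying the cancellation in Step 1 and the independence of phases; for these I would rely on the modelling assumptions stated in the proof of Theorem~\ref{th:wsformulation}.
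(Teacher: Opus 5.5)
Your proof is correct, but it follows a different route from the paper's, and one of your steps does more work than it appears to.

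\textbf{How the paper argues.} The paper cancels nothing. After relabeling, it treats $X_1=T_1^{(1)}$ and $Y_1=T_1^{(2)}$ as independent copies. It writes the exact probability as a sum of residues of $M_Z(s)/s$ at the positive poles, but never evaluates that sum. It then argues heuristically from the Gumbel limit $\max_M\approx\ln M+G$: the net drag $(b-a)\ln f$ gives $f^{-(b-a)}$, and convolving the $k$ positive Gumbel tails gives a prefactor $x^{k-1}$, with $k=3,2,2$.

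\textbf{How you argue.} You use the paper's modelling remark that $T_1^{(2)}$ has the same duration as $T_1^{(1)}$ to cancel that term. You then prove a genuine non-asymptotic bound from three ingredients: exact R\'enyi MGFs, Gamma-ratio (Gautschi) estimates that are uniform up to the pole, and the tilt $\theta=\lambda(1-1/\ln f)$. In your bound, $\Gamma(1/\ln f)^a\le(\ln f)^a$ plays the role of the paper's convolution prefactor.

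\textbf{Step 1 is load-bearing.} Your Chernoff factor $(\ln f)^{a}$ is one power of $\ln f$ larger than the paper's prefactor $(\ln f)^{k-1}$. The exponents of $\ln f$ agree with the theorem only because your cancellation lowers the number of positive maxima from $3,2,2$ to $a=2,1,1$. Suppose you had kept $X_1$ and $Y_1$ independent, as the paper's computation does. Then the same method would give only $\mathcal{O}(\ln^3 f/f^2)$, $\mathcal{O}(\ln^2 f/f^3)$ and $\mathcal{O}(\ln^2 f/f^2)$, each off by a factor of $\ln f$. So state explicitly that you rely on the ``same duration'' modelling assumption rather than on the independence the paper's calculation uses.

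\textbf{What each approach buys.} Yours gives a rigorous bound with explicit constants and needs no unproved tail-convolution lemma or Gumbel approximation. The paper's heuristic is sharper by a factor of $\ln f$ under its own independent model, but it is not a proof.
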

\begin{proof}
To compute $P(B<A)$,
the exact probability that \(Y + Y_1 + Y_2 + Y_3 + Y_4 + Y_5 < X_1 + X_2 + X_3\) can be solved using Rényi’s representation for exponential order statistics.

\begin{table*}[htbp]
\centering
\caption{Comparative Analysis of Features of the Probability derivations.}
\label{tab:probcomparison}
{\small
\begin{tabular}{lccccc}
\hline
System State & Dominant RHS Growth & Dominant LHS Growth & Net Deterministic Drag & Positive Poles (\(k\)) & Asymptotic Probability
\\
\hline
$P(B<A)$ & \(3\ln f\) & \(5\ln f\) &\(2\ln f\) & \(3\) & \(\mathcal{O}\left(\frac{\ln^2 f}{f^2}\right)\)
\\
$P(B<A')$; Remove \(X_{3}\) & \(2\ln f\) & \(5\ln f\) & \(3\ln f\) & \(2\) & \(\mathcal{O}\left(\frac{\ln f}{f^3}\right)\) 
\\
$P(B'<A')$; Remove \(X_{3}\),\(Y_{5}\) & \(2\ln f\) & \(4\ln f\) & \(2\ln f\) & \(2\) & \(\mathcal{O}\left(\frac{\ln f}{f^2}\right)\)
\\
\hline
\end{tabular}
}
\end{table*}

Using Rényi’s representation, the maximum of \(K\) independent identically distributed exponential variables with rate \(\lambda \) is identically distributed to a weighted sum of independent standard exponentials with rates scaling from \(1\lambda\) to \(K\lambda\). Setting \(\lambda = 1\) without loss of generality, we can group the total number of overlapping independent exponential variables (multiplicities) on both sides:

Right-Hand Side (RHS) Pools:
\begin{itemize}
\item From \(X_1, X_2\): Two independent sets of exponentials at each rate \(j \in \{1, \dots, f+1\}\).
\item From \(X_{3}\): One set of exponentials at each rate \(j \in \{1, \dots, 2f+1\}\).
\end{itemize}

Left-Hand Side (LHS) Pools:
\begin{itemize}
\item From \(Y\): One standard exponential at rate \(1\).
\item From \(Y_1, Y_2, Y_4, Y_5\): Four independent sets of exponentials at each rate \(j \in \{1, \dots, f+1\}\).
\item From \(Y_{3}\): One set of exponentials at each rate \(j \in \{1, \dots, 2f+1\}\).
\end{itemize}

Total Multiplicities by Rate Pool:\\
By aggregating the variables across their respective spans, we establish the pole structures for the system:
\begin{itemize}
\item Rate range: \(j = 1\); LHS multiplicity (Poles at \(-j\)): 6; 
RHS multiplicity (Poles at \(+j\)): 3 
\item Rate range: \(2 \le j \le f+1\); LHS multiplicity (Poles at \(-j\)) : 5; 
RHS multiplicity (Poles at \(+j\)): 3 
\item Rate range: \(f+2 \le j \le 2f+1\); LHS multiplicity (Poles at \(-j\)): 1; 
RHS multiplicity (Poles at \(+j\)): 1 
\end{itemize}

Letting \(Z = \text{RHS} - \text{LHS}\), the goal is to evaluate \(P(Z > 0)\). The Moment Generating Function (MGF) of \(Z\) is:\\
\(M_{Z}(s)=\left[\prod _{j=1}^{2f+1}\left(\frac{j}{j-s}\right)^{m_{\text{RHS}}(j)}\right]\times \left[\prod _{j=1}^{2f+1}\left(\frac{j}{j+s}\right)^{m_{\text{LHS}}(j)}\right]\)

By closing the complex contour around the Right Half-Plane (RHP), we isolate only the positive poles. This minimizes the computation since the higher-index positive poles drop down to a multiplicity of 1. 
The exact probability is found by summing the residues of \(\frac{M_{Z}(s)}{s}\) at these positive poles:
\(P(Z>0)=-\sum _{j=1}^{2f+1}\text{Res}\left(\frac{M_{Z}(s)}{s},s=j\right)\)
\begin{itemize}
\item For \(j \le f+1\): The poles have a multiplicity of 3, requiring a 2nd-order derivative to find the residue.
\item For \(j \ge f+2\): The poles have a multiplicity of 1, allowing for a direct substitution without derivatives.
\end{itemize}

To find the asymptotic behavior of the probability, 
we analyze how the logarithmic growth of the maximum statistics scales on both sides of the inequality. 
%
As the sample size grows large, 
the maximum of \(M\) independent standard exponential random variables converges to a Gumbel distribution shifted by \(\ln M\):
\(\text{Max}_{M}\approx \ln M+G\),
where \(G\) is a standard Gumbel random variable with an exponential upper tail \(P(G > x) \approx e^{-x}\).
\begin{enumerate}
\item Right-Hand Side (RHS):
The RHS contains two variables of size \(f+1\) and one of size \(2f+1\):
\(\text{RHS}\approx \ln (f+1)+\ln (f+1)+\ln (2f+1)+(G_{X,1}+G_{X,2}+G_{X,3})\)
Using the logarithmic property \(\ln(2f+1) \approx \ln f + \ln 2\), this simplifies to:
\(\text{RHS}\approx 3\ln f+\ln 2+\sum _{i=1}^{3}G_{X,i}\)
\item Left-Hand Side (LHS): The LHS contains four variables of size \(f+1\), one of size \(2f+1\), and the standard exponential variable \(Y\):
\(\text{LHS}\approx 4\ln (f+1)+\ln (2f+1)+Y+\sum _{j=1}^{5}G_{Y,j}\)
Simplifying the logarithmic growth:
\(\text{LHS}\approx 5\ln f+\ln 2+Y+\sum _{j=1}^{5}G_{Y,j}\)
\end{enumerate}

Determining the Big-O Bound:\\
We want to find the probability that \(\text{RHS} - \text{LHS} > 0\). Subtracting the two expressions gives:\\
\(\left(3\ln f+\ln 2+\sum _{i=1}^{3}G_{X,i}\right)-\left(5\ln f+\ln 2+Y+\sum _{j=1}^{5}G_{Y,j}\right)>0\)

Notice that the \(\ln 2\) terms from the \(2f+1\) distributions cancel out perfectly, leaving:
\(\sum _{i=1}^{3}G_{X,i}-\sum _{j=1}^{5}G_{Y,j}-Y>2\ln f\).
Let \(W\) represent the combined stochastic error terms: \(W = \sum_{i=1}^3 G_{X,i} - \sum_{j=1}^5 G_{Y,j} - Y\). The problem reduces to evaluating the upper-tail probability \(P(W > 2\ln f)\):
\begin{enumerate}
\item The Exponential Decline (\(f^{2}\) Denominator):
The systemic mean shift moving against the inequality grows at a rate of \(2\ln f\). Because the upper tails of Gumbel variables decay exponentially (\(e^{-x}\)), evaluating this decay at the required shift threshold yields:
\(e^{-2\ln f}=\frac{1}{f^{2}}\)
\item The Polynomial Prefactor (\(\ln ^{2}f\) Numerator):
The upper tail of \(W\) is entirely driven by the convolution of the 3 positive Gumbel variables (\(G_{X,1}, G_{X,2}, G_{X,3}\)). Convolving \(k\) independent variables with exponential tails introduces a polynomial prefactor of order \(x^{k-1}\). For \(k=3\), the tail profile behaves as \(\mathcal{O}(x^2 e^{-x})\). Substituting \(x = 2\ln f\) yields:
\((2\ln f)^{2}=\mathcal{O}(\ln ^{2}f)\)
\end{enumerate}

Combining these two dynamics gives the final tight asymptotic bound:
\(P(Z>0)=\mathcal{O}\left(\frac{\ln ^{2}f}{f^{2}}\right) = P(B<A)\).

If we remove \(X_{3}\) from the Right-Hand Side (RHS), proceeding along similar lines, it follows that the probability scales asymptotically as \(\mathcal{O}\left(\frac{\ln f}{f^3}\right) = P(B<A')\).

If we remove \(Y_{5}\) from the Left-Hand Side (LHS) while keeping \(X_{3}\) removed from the RHS, proceeding along similar lines, it follows that the probability scales asymptotically as \(\mathcal{O}\left(\frac{\ln f}{f^2}\right) = P(B'<A')\).

Table~\ref{tab:probcomparison} compares features of the three computations.
\end{proof}

\begin{theorem}[$P(D<C)$] $P(D<C) = \mathcal{O}\left(\frac{1}{f^4} \right)$
\label{th:dlc}
\end{theorem}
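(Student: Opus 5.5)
Our plan is to reuse the machinery of Theorem~\ref{th:bla}: Rényi's representation of exponential maxima, Gumbel asymptotics for the large pools, and an explicit tail computation for the single exponential. Here $C \sim \text{Exp}(1)$ (taking $\lambda=1$). Path $D$ is a sum of independent terms:
\[
D = Y + T_2^{(3)} + T_3^{(3)} + T_4^{(3)} + T_5^{(3)},
\]
where $Y\sim\text{Exp}(1)$. The term $T_3^{(3)}$ is a maximum of $2f+1$ standard exponentials. The three terms $T_2^{(3)}, T_4^{(3)}, T_5^{(3)}$ are each a maximum of $f+1$ standard exponentials. Since $C$ is a single exponential, conditioning on $D$ gives the exact identity
\[
P(D<C)=E\big[e^{-D}\big]=M_D(-1)=\frac{1}{2}\prod_{\text{pools}}\prod_{j=1}^{K}\frac{j}{j+1},
\]
using the MGF of Rényi's representation $\max_K \stackrel{d}{=} \sum_{j=1}^K E_j/j$, each factor being $j/(j+1)$. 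So no contour integration is needed, unlike in Theorem~\ref{th:bla}.

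The key step is the telescoping product $\prod_{j=1}^{K}\frac{j}{j+1}=\frac{1}{K+1}$. Applying it to each pool:
\begin{itemize}
\item each of the three $(f+1)$-pools contributes $1/(f+2)$;
\item the $(2f+1)$-pool contributes $1/(2f+2)$;
\item $Y$ contributes $1/2$.
\end{itemize}
Hence
\[
P(D<C)=\frac{1}{2}\cdot\frac{1}{(f+2)^3}\cdot\frac{1}{2(f+1)}=\Theta\!\left(\frac{1}{f^{4}}\right),
\]
which gives the claimed $\mathcal{O}(f^{-4})$. As a cross-check, the Gumbel heuristic of Theorem~\ref{th:bla} gives $D\approx 4\ln f+\text{noise}$. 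With $P(C>x)=e^{-x}$, this yields $e^{-4\ln f}=f^{-4}$. There is no polynomial prefactor, because the only tail on the positive side is a single exponential ($k=1$ pole).

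The main obstacle is not the calculation but justifying the modeling. First, we must confirm that $C$ is independent of all the pools in $D$, since the direct ACK link $p_k\to p_j$ is a distinct link. This uses the same phase-independence approximation invoked in the proof of Theorem~\ref{th:wsformulation}. Second, we must fix the exact quorum sizes of each pool. If the pool sizes differed, for example $f$ versus $f+1$, only constants would change, so I would state the result as $\Theta$, and hence $\mathcal{O}$, with explicit constant $1/4$ asymptotically.
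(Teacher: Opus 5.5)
Your proposal is correct and matches the paper's proof essentially step for step. Both condition on $D$ to get $P(D<C)=\mathbb{E}[e^{-D}]=M_D(-1)$, factor this over the independent terms, and telescope $\prod_{j=1}^{K}\frac{j}{j+1}=\frac{1}{K+1}$ to reach the exact value $\frac{1}{4(f+1)(f+2)^3}$; your Gumbel cross-check and the observation that the bound is in fact $\Theta(f^{-4})$ are harmless additions.
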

\begin{proof}
For the inequality \(Y + Y_2 + Y_3 + Y_4 + Y_5 < X\), where \(X\) is an independent standard exponential random variable, we show that the probability scales asymptotically as \(\mathcal{O}\left(\frac{1}{f^4}\right)\).
Because \(X\) is a single standard exponential variable on the larger side of the inequality, this problem can be solved exactly for any \(f\) without needing Gumbel approximations.

Let \(V = Y + Y_2 + Y_3 + Y_4 + Y_5\). Because \(V\) is a sum of strictly positive random variables completely independent of \(X\), we can find the probability \(P(X > V)\) by conditioning on \(V\):
\(P(X>V)=\mathbb{E}[P(X>V\mid V)]\).
Since \(X \sim \text{Exp}(1)\), its survival function is exactly \(P(X > v) = e^{-v}\). Substituting this in gives:
\(P(X>V)=\mathbb{E}[e^{-V}]=M_{V}(-1)\) where \(M_V(-1)\) is the Moment Generating Function (MGF) of \(V\) evaluated at \(s = -1\). Because all components of \(V\) are independent, the MGF of the sum is simply the product of their individual MGFs:\\
\(M_{V}(-1)=M_{Y}(-1)\cdot M_{Y_{2}}(-1)\cdot M_{Y_{4}}(-1)\cdot M_{Y_{5}}(-1)\cdot M_{Y_{3}}(-1)\)
\begin{enumerate}
\item MGF of the Standard Exponential (\(Y\))\\
For \(Y \sim \text{Exp}(1)\), the MGF is \(\frac{1}{1-s}\). Evaluated at \(s = -1\):
\(M_{Y}(-1)=\frac{1}{1-(-1)}=\frac{1}{2}\)
\item MGF of the Maximum Statistics (\(Y_2, Y_4, Y_5, Y_3\))\\
Using Rényi’s representation, the maximum of \(K\) independent standard exponential variables has an MGF of \(\prod_{j=1}^K \frac{j}{j-s}\). When evaluated at \(s = -1\), this telescopes perfectly:
\(M_{\max _{K}}(-1)=\prod _{j=1}^{K}\frac{j}{j+1}=\frac{1}{2}\cdot \frac{2}{3}\cdot \frac{3}{4}\cdots \frac{K}{K+1}=\frac{1}{K+1}\)

Applying this clean identity to our specific variables: 
\begin{itemize}
\item For \(Y_2, Y_4, Y_5\) (each a maximum of \(K = f+1\) variables): \(M_{Y_{i}}(-1)=\frac{1}{(f+1)+1}=\frac{1}{f+2}\)
\item For \(Y_{3}\) (a maximum of \(K = 2f+1\) variables): \(M_{Y_{3}}(-1)=\frac{1}{(2f+1)+1}=\frac{1}{2f+2}\)
\end{itemize}
\end{enumerate}
Multiplying these independent expectations together yields: 
\(P(X>V)=\frac{1}{2}\cdot \left(\frac{1}{f+2}\right)^{3}\cdot \frac{1}{2(f+1)}=\frac{1}{4(f+1)(f+2)^{3}} = P(D<C)\)
\end{proof}

\subsection{Analytical Evaluation of Causal Weak Safety Violation ($\mathcal{O}(\frac{\ln^3 f}{f^{4}})$)}
\label{sec:analytical_eval}
\begin{theorem}[Causal Weak Safety Violation Bound]
Under optimal resilience bounds ($n \ge 3f + 1$), the probability of a causal ordering weak safety violation at any correct process $p_j$ (for a path length of 2: $p_i$ to $p_k$ to $p_j$) is bounded by 
$\mathcal{O}(\frac{\ln^3 f}{f^{4}})$.
\label{th:ws2analysis}
\end{theorem}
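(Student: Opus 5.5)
The bound follows by substituting the three asymptotic estimates already established into the exact formula of Theorem~\ref{th:wsformulation}. That theorem gives
\[
P(\text{CO violation at } p_j) = P(B<A)\,\bigl[1-(1-P(B<A'))(1-P(D<C))\bigr].
\]
My first step is to expand the bracket as
\[
P(B<A') + P(D<C) - P(B<A')\,P(D<C),
\]
and then bound it above by $P(B<A') + P(D<C)$. This is valid because both probabilities lie in $[0,1]$, so the cross term is nonnegative and can be dropped.

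Next I insert the estimates. By Theorem~\ref{th:bla}, $P(B<A') = \mathcal{O}(\ln f / f^{3})$. By Theorem~\ref{th:dlc}, $P(D<C) = \frac{1}{4(f+1)(f+2)^3} = \mathcal{O}(f^{-4})$. The sum is therefore dominated by the first term, and the bracket is $\mathcal{O}(\ln f / f^{3})$. Multiplying by $P(B<A) = \mathcal{O}(\ln^2 f / f^2)$ from Theorem~\ref{th:bla} gives
\[
\mathcal{O}\!\left(\frac{\ln^2 f}{f^2}\right)\cdot \mathcal{O}\!\left(\frac{\ln f}{f^3}\right)
= \mathcal{O}\!\left(\frac{\ln^3 f}{f^{5}}\right),
\]
which is contained in the claimed $\mathcal{O}(\ln^3 f / f^{4})$. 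If one wants the weaker form to be explicit, note that the $P(D<C)$ term alone contributes $\mathcal{O}(\ln^2 f / f^{6})$. The stated bound holds in either case; I would remark that it is not tight under these estimates.

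The main obstacle is not the algebra. It is justifying that the product formula of Theorem~\ref{th:wsformulation} may be used with these asymptotic factors. The paths $B$, $A'$, and $D$ share components (for example, $Y_1$ and the $m_2$ phases), so the events $\{B<A\}$, $\{B<A'\}$, and $\{D<C\}$ are not independent. That theorem treats them as independent under its approximation that phases end simultaneously. I would simply invoke that theorem as stated rather than reopen the issue.

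If extra rigor is wanted, I would avoid the product entirely and use a union-style bound:
\[
P\bigl(\{B<A\}\cap(\{B<A'\}\cup\{D<C\})\bigr) \le P(B<A') + P(B<A,\ D<C).
\]
By itself this only yields $\mathcal{O}(\ln f / f^{3})$, so I would stay with the independence approximation of Theorem~\ref{th:wsformulation}, consistent with how the paper derives the formula. Finally, I would note that the dependence on $n$ disappears because $f=\mathcal{O}(n)$ and $n\ge 3f+1$ enter only through the quorum sizes $f+1$ and $2f+1$ already used in Theorems~\ref{th:bla} and~\ref{th:dlc}.
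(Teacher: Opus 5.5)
Your computation for a fixed triple $(p_i,p_k,p_j)$ is correct and matches the paper's first step. Dropping the cross term and inserting Theorems~\ref{th:bla} and~\ref{th:dlc} gives $\mathcal{O}(\ln^3 f/f^5)$. Invoking Theorem~\ref{th:wsformulation} as stated, independence approximation included, is also what the paper does.

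The gap is in how you pass from $f^{-5}$ to the stated $f^{-4}$. The theorem bounds a weak-safety violation at $p_j$ through \emph{some} length-2 chain $p_i \to p_k \to p_j$. The intermediary $p_k$ is not fixed: it can be any of the $n-f$ correct processes. The formula of Theorem~\ref{th:wsformulation}, by contrast, describes one fixed choice of $p_k$. The paper therefore finishes with a union bound over the $n-f\approx 2f$ possible intermediaries. That factor of order $f$ is exactly what turns $\mathcal{O}(\ln^3 f/f^5)$ into $\mathcal{O}(\ln^3 f/f^4)$. So the discrepancy you noticed is not slack, and calling the stated bound ``not tight'' misattributes it. As written, your argument bounds only one of the roughly $2f$ events whose union the theorem is about.

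For the same reason, your closing remark about $n$ is wrong. $n$ enters not only through the quorum sizes $f+1$ and $2f+1$ but also through the number of candidate intermediaries. The union bound gives $(n-f)\cdot\mathcal{O}(\ln^3 f/f^5)$. This is $\mathcal{O}(\ln^3 f/f^4)$ only if $n-f=\mathcal{O}(f)$, i.e.\ under the near-optimal-resilience assumption $n=\mathcal{O}(f)$. The paper states this assumption explicitly in the proof of Theorem~\ref{th:ws3+analysis} and uses the same $2f$ union bound in Theorem~\ref{th:wsdeadlock}.

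To repair the proof, keep your derivation as the per-$p_k$ bound. Then add the union-bound step over the correct intermediaries, with the assumption $n=\mathcal{O}(f)$ made explicit.
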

\begin{proof}
Using the bounds derived in Theorems~\ref{th:bla},~\ref{th:dlc}, the probability of a causal safety violation at process $p_j$ as per Theorem~\ref{th:wsformulation} is:
\[
P(B < A) \times [1 - (1 - P(B < A')) \times (1 - P(D < C))]
\]
\[
\leq \mathcal{O}(\frac{\ln^2 f}{f^{2}}) \times [1 - (1 - \mathcal{O}(\frac{\ln f}{f^{3}})) \times (1 - \mathcal{O}(\frac{1}{f^{4}}))] \leq \mathcal{O}(\frac{\ln^3 f}{f^{5}})
\]
Since $p_k$ could be any correct process, by a union bound over the $n - f \sim 2f$ correct processes:
\[
P(\text{CO violation}) \leq 2f \cdot \mathcal{O}(\frac{\ln^3 f}{f^{5}}) \leq \mathcal{O}(\frac{\ln^3 f}{f^{4}}) 
\]
\end{proof}

\subsection{Causal Weak Safety Violation Probability for All Path Lengths ($\mathcal{O}(\frac{\ln^3 f}{f^{4}})$)}
We generalize the analysis to causal chains of longer path lengths through non-repeating/unique processes.
\begin{theorem}[Causal Weak Safety Violation for Longer Paths]
Under optimal resiliency bounds ($n \ge 3f+1$), for all causal chains of path length $L$, where $(n-f) > L > 2$, passing through correct processes, the sum of the probabilities of a causal ordering weak safety violation at any correct process is strictly smaller than that of path length 2, which is $\mathcal{O}(\frac{\ln^3 f}{f^{4}})$, 
as the system size scales.
\label{th:ws3+analysis}
\end{theorem}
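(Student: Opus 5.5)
We follow the structure of Theorem~\ref{th:wsformulation} and reuse the Rényi/Gumbel machinery of Theorems~\ref{th:bla} and~\ref{th:dlc}. Consider a chain $p_{i_0}\to p_{i_1}\to\cdots\to p_{i_L}$ of $L$ hops through distinct correct processes, with $m_1$ broadcast by $p_{i_0}$ and $m_2$ broadcast by $p_{i_{L-1}}$ after the chain; the destination is a correct $p_j$. Path $A$ (BCRB-delivery of $m_1$ at $p_j$, worst case) and $A'$ are unchanged. Path $B$ gains one full BCRB round per extra hop. Each intermediate message adds an INIT link, an ECHO phase of $2f+1$, a READY phase of $f+1$, and an ACK phase of $f+1$ before the next process can broadcast. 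Thus $B_L$ is $B$ plus $(L-2)$ independent blocks, each adding at least $3$ maxima of $f+1$ or $2f+1$ variables and one $\mathrm{Exp}(1)$. Analogously, $D_L$ is $D$ plus the same $(L-2)$ blocks, while $C$ stays a single direct ACK link.

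For the first step we repeat the Gumbel computation of Theorem~\ref{th:bla}. The LHS drift grows from $5\ln f$ to $(5+3(L-2))\ln f$ and the RHS stays at $3\ln f$. The positive poles remain those of $X_1,X_2,X_3$ (multiplicity $3$), so the tail prefactor is unchanged and
\[
P(B_L<A)=\mathcal{O}\!\left(\frac{\ln^2 f}{f^{2+3(L-2)}}\right),\qquad P(B_L<A')=\mathcal{O}\!\left(\frac{\ln f}{f^{3+3(L-2)}}\right).
\]
Likewise, the exact MGF identity $M_{\max_K}(-1)=1/(K+1)$ from Theorem~\ref{th:dlc} gives $P(D_L<C)=\mathcal{O}(f^{-4-3(L-2)})$, because each extra block multiplies the bound by at most $\frac12\cdot\frac{1}{(f+2)^2}\cdot\frac{1}{2f+2}$. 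Substituting these into the formula of Theorem~\ref{th:wsformulation} gives a per-chain bound of $\mathcal{O}(\ln^3 f\, f^{-5-6(L-2)})$ for fixed intermediate processes.

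For the union bound, the number of ordered choices of the $L-1$ distinct intermediate correct processes is at most $(2f)^{L-1}$. The total contribution of length $L$ is therefore at most
\[
c\,\ln^3 f\cdot 2^{L-1} f^{L-1}\, f^{-5-6(L-2)} = \mathcal{O}\!\left(\ln^3 f\cdot f^{-4}\cdot (2/f^{5})^{L-2}\right).
\]
Summing over $3\le L<n-f$ is a geometric series with ratio $2f^{-5}\ll 1$. The total is therefore $\mathcal{O}(\ln^3 f\, f^{-9})$, which is strictly smaller than the length-2 bound $\mathcal{O}(\ln^3 f/f^4)$ of Theorem~\ref{th:ws2analysis} for large $f$.

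The main obstacle is making the exponents uniform in $L$. The Gumbel approximation and the big-$\mathcal{O}$ constants in Theorem~\ref{th:bla} were derived for a fixed number of summands. With $L$ growing up to $n-f$, the hidden constant could grow with $L$, for example through the $x^{k-1}$ prefactor or through the accumulated $\ln 2$ shifts. To avoid this, I would use a Chernoff bound instead of the Gumbel heuristic: $P(B_L<A)\le M_A(t)\,M_{B_L}(-t)$ at a fixed $t\in(0,1)$. This factorizes into telescoping products, $M_{\max_K}(-t)\le C_t K^{-t}$ and $M_{\max_K}(t)\le C_t' K^{t}$, which makes each extra block contribute an explicit factor $\mathcal{O}(f^{-3t})$ with a constant independent of $L$. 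The resulting ratio $(2f)\cdot\mathcal{O}(f^{-3t})$ is still $<1$ for large $f$ when $t>1/3$, so the series remains geometric.
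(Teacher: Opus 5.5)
Your skeleton matches the paper's. You use Gumbel/Rényi for $P(B_L<A)$ with the left-hand drift growing by $3\ln f$ per hop, the exact identity $M_{\max_K}(-1)=1/(K+1)$ for the ACK race, a union bound over $(2f)^{L-1}$ chains, and a geometric sum over $L$. Two of your departures are improvements:
\begin{itemize}
\item Your drift exponent $2+3(L-2)$ agrees with the length-2 base case. The paper's $B(L)$ carries $L$ per-hop blocks and gets $3L-1$, one block too many.
\item Your Chernoff bound $M_A(t)M_{B_L}(-t)$ gives constants uniform in $L$, which the Gumbel heuristic does not.
\end{itemize}

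\textbf{The gap.} The unjustified step is the gating factor $P(D_L<C)=\mathcal{O}(f^{-4-3(L-2)})$. This treats protection of the last message against $m_1$ at $p_j$ as a single race between one direct ACK and the whole chain. Write $m^{(r)}$ for the message of $p_{i_r}$, so $m^{(0)}=m_1$ and $m^{(L-1)}$ is your $m_2$. You do not say whose ACK $C$ is, and it matters.

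\emph{If $C$ is $p_{i_1}$'s ACK of $m_1$:} it installs only the edge $(m^{(1)},m_1)$, which does not by itself block $m^{(L-1)}$. The other edges $(m^{(r)},m^{(r-1)})$ each race a single hop. The last one loses its race with probability $\frac{1}{4(f+1)(f+2)^3}$ for every $L$. So $D_L<C$ is not a necessary condition for the violation, and $P(D_L<C)$ does not bound the bracket.

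\emph{If $C$ is the ACK of $m_1$ sent by the final sender $p_{i_{L-1}}$:} this reading can work. In Algorithm~\ref{alg:bcrb_algorithm} every correct process ACKs every message it BRB-delivers. That ACK adds an edge from $(i_{L-1},next\_sn)$ to $m_1$, with $next\_sn$ at most the sequence number of $m^{(L-1)}$, and this blocks $m^{(L-1)}$ via $fifo\_ok$. But $D_L=D+(L-2)$ blocks then presumes $p_{i_{L-1}}$ BRB-delivered $m_1$ at the start of the chain. That needs either the paper's ``phases end simultaneously'' approximation, or a recursive argument plus timing control:
\begin{itemize}
\item no violation at $p_{i_{L-1}}$ on the shorter chain, and
\item a bound on how late $p_{i_{L-1}}$ receives $m_1$. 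If $m_1$ arrives just before $p_{i_{L-1}}$ delivers $m^{(L-2)}$, $D_L$ shrinks to about one block.
\end{itemize}
The paper avoids choosing a route. It sums the race over all suffixes $m_{L-x}\to m_L$ and uses only the $L$-independent bound $\mathcal{O}(f^{-4})$.

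\textbf{The repair.} The theorem does not need the decay you claim. Keep the bracket at $\mathcal{O}(f^{-4})$ as the paper does, or even at $(L-1)$ times the one-hop failure probability $\mathcal{O}(\ln f/f^3)$. With your exponent and the first choice, the length-$L$ contribution is
\[
(2f)^{L-1}\cdot\mathcal{O}\!\left(\frac{\ln^2 f}{f^{3L-4}}\right)\cdot\mathcal{O}\!\left(\frac{1}{f^{4}}\right)=\mathcal{O}\!\left(\frac{2^{L}\ln^2 f}{f^{2L+1}}\right).
\]
This is a geometric series with ratio $\mathcal{O}(f^{-2})$ and sum $\mathcal{O}(\ln^2 f/f^{7})$. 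The second choice gives $\mathcal{O}(\ln^3 f/f^{6})$. Both are strictly below $\ln^3 f/f^4$. Your Chernoff version goes through the same way for $t$ close to $1$.

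So either make the final-sender mechanism and its timing explicit, or drop the $L$-dependence of the gating factor. Your final $\mathcal{O}(\ln^3 f/f^{9})$ is not supported as written.
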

\begin{proof}
Let a causal chain for path B of length $L \ge 2$ be $m_1 \to m_2 \to \dots \to m_L$, where the messages are broadcasted sequentially by correct processes. 
Path A (BCRB-delivery of $m_1$ to $p_j$) is independent of the chain length $L$. Path B (BCRB-delivery of $m_L$ to $p_j$) scales with $L$: 
\[
    B(L) = T_1^{(2)} + T_2^{(2)} + L\left(Y_1(m_2, \text{INIT}, 1, -) + T_3^{(2)} + T_4^{(2)} + T_5^{(2)}\right)
    \]
    \[
    = Y_1 + Y_2 + L\left(Y + Y_3 + Y_4 + Y_5\right)
    \]

We first compute $P(B<A)$ for the asymptotic case.
%
Using the Gumbel approximation (\(\text{Max}_M \approx \ln M + G\)), we map out the growth rates of both sides:
\begin{itemize}
\item Max of \(f+1\) variables: \(Y_1, Y_2, Y_4, Y_5, X_1, X_2 \approx \ln f + G\)
\item Max of \(2f+1\) variables: \(Y_3, X_3 \approx \ln(2f) + G = \ln f + \ln 2 + G\)
\item Standard exponential variable: \(Y \approx \mathcal{O}(1)\)
\end{itemize} 
\begin{enumerate}
\item Right-Hand Side (RHS) Growth:
The RHS contains two variables of size \(f+1\) and one variable of size \(2f+1\):
\(\text{RHS}\approx \ln f+\ln f+(\ln f+\ln 2)+\sum _{i=1}^{3}G_{X,i}=3\ln f+\ln 2+\sum _{i=1}^{3}G_{X,i}\)
\item  Left-Hand Side (LHS) Growth:
The modified LHS scales \(Y, Y_3, Y_4, Y_5\) by the factor \(L\):
\(\text{LHS}\approx (\ln f+G_{Y,1})+(\ln f+G_{Y,2})+L\Big(\mathcal{O}(1)+(\ln f+\ln 2+G_{Y,3})+\ln f+\ln f+G_{Y,4}+G_{Y,5}\Big)\)

Collapsing the logarithmic coefficients yields:
\[\text{LHS}\approx 2\ln f+G_{Y,1}+G_{Y,2}+L\big(3\ln f+\ln 2+G_{Y,3}+G_{Y,4}+G_{Y,5}\big)\]
\[\text{LHS}\approx (2+3L)\ln f+L\ln 2+G_{Y,1}+G_{Y,2}+L(G_{Y,3}+G_{Y,4}+G_{Y,5})\]
\end{enumerate}

We now evaluate \(P(\text{RHS} - \text{LHS}) > 0\): 
\(\left(3\ln f+\ln 2+\sum _{i=1}^{3}G_{X,i}\right)-\-\left((2+3L)\ln f+L\ln 2+G_{Y,1}+G_{Y,2}+\-L(G_{Y,3}+G_{Y,4}+G_{Y,5})\right)>0\).
Isolating the deterministic growth terms on the right-hand side gives:
\[\sum _{i=1}^{3}G_{X,i}-G_{Y,1}-G_{Y,2}-L(G_{Y,3}+G_{Y,4}+G_{Y,5})>(3L-1)\ln f+(L-1)\ln 2\]
Let \(W\) represent the combined stochastic terms on the left. The upper-tail probability \(P(W > (3L - 1)\ln f + (L-1)\ln 2 
)\) decomposes into:
\begin{enumerate}
\item The Exponential Decline (\(f^{3L-1}\) Denominator):
Because \(L > 1\), the LHS outpaces the RHS. The net deterministic drag moving against the inequality grows at a rate \(\approx\) \((3L - 1)\ln f\). Evaluating the exponential tail decay (\(e^{-x}\)) at this exact threshold yields:
\(e^{-(3L-1)\ln f}=\frac{1}{f^{3L-1}}\)
\item The Polynomial Prefactor (\(\ln ^{2}f\) Numerator):
The upper tail behavior of \(W\) is entirely governed by the convolution of the 3 positive Gumbel variables on the RHS (\(G_{X,1}, G_{X,2}, G_{X,3}\)). Convolving \(k=3\) independent variables with exponential tails introduces a polynomial prefactor of order \(x^{k-1} = x^2\). Substituting the threshold \(x \approx (3L - 1)\ln f\) results in:
\(((3L-1)\ln f)^{2}=\mathcal{O}(L^2 \ln ^{2}f)\)
\end{enumerate}
Combining these two yields the corrected, final asymptotic bound:
\(P(Z>0)=\mathcal{O}\left(\frac{L^2 \ln ^{2}f}{f^{3L-1}}\right) = P(B<A)\)

Path A' (BRB-delivery of $m_1$ to $p_j$) is independent of the chain length $L$. Similar to the derivation of $P(B < A)$ above, we have:
\(P(B<A') = \mathcal{O}\left(\frac{L \ln f}{f^{3L}}\right)\)

For the $P(D < C)$ term, the delivery race condition  concerns each suffix of the causal chain ($m_{L-x} \to m_L$). Based on the derivation of $P(D < C)$ of the baseline case of path length 2, we have:
\[
P(D < C) = \sum^{L-1}_{x=2}\frac{1}{(4(f+1)(f+2)^3)^x} \le \approx \mathcal{O}(f^{-4})
\]
To perform a worst-case analysis considering multiple paths of length $L$, we note that there are at most $(n-f)^{L-1} \approx (2f)^{L-1}$ causal paths of length $L$ for $B$. (This assumes $n = \mathcal{O}(f)$ and close to optimal resilience bound $3f+1$.) By taking a union bound over these paths, the probability that any such path $B(L)$ delivers before path $A$ and violates gating is bounded by:
\[
P(\text{CO violation for all paths of length } L)
\]
\[
\le (2f)^{L-1} \cdot P(B < A) \cdot \left[ 1 - (1 - P(B < A')) \cdot (1 - P(D < C))\right]
\]
\[
\le (2f)^{L-1} \cdot \mathcal{O}(\frac{L^2 \ln^2 f}{f^{3L-1}}) \cdot \left(1 - (1 - \mathcal{O}(\frac{L \ln f}{f^{3L}})) \cdot (1 - \mathcal{O}(f^{-4})\right)
\]
\[
\le (2f)^{L-1} \cdot \mathcal{O}(\frac{L^2 \ln^2 f}{f^{3L-1}}) \cdot \left(\mathcal{O}(\frac{L \ln f}{f^{3L}}) + \mathcal{O}(f^{-4})\right)
\]
To evaluate the overall weak safety violation probability for paths of any length greater than 2, we sum the probabilities for all possible path lengths $L$ from $3$ to the maximum path length through correct processes, $n-f \sim 2f$:
\[
\sum_{L=3}^{2f} (2f)^{L-1} \cdot \mathcal{O}(\frac{L^2 \ln^2 f}{f^{3L-1}}) \cdot \left(\mathcal{O}(\frac{L \ln f}{f^{3L}}) + \mathcal{O}(f^{-4})\right)
\]
\[
\le \sum_{L=3}^{2f} (2f)^{L-1} \cdot \mathcal{O}(\frac{L^2 \ln^2 f}{f^{3L-1}}) \cdot \left(\mathcal{O}(f^{-4})\right) \leq \mathcal{O}(\frac{\ln^2 f}{f^4})
\]
This confirms that longer paths lead to strictly lower order violation probabilities than for paths of length 2 (Theorem~\ref{th:ws2analysis}).
\end{proof}

\begin{corollary}
\label{co:wstotalprobability}
The probability of weak safety violation (without considering deadlock avoidance) over all path lengths $\geq 2$ is $\mathcal{O}(\frac{\ln^3 f}{f^{4}})$.
\end{corollary}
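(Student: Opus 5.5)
The corollary follows by a union bound over path lengths, taking the two preceding theorems as given. A weak-safety violation, with deadlock avoidance ignored (no edge of $G$ is ever deleted), requires some pair $m_1 \to m_2$ whose causal chain runs only through correct processes, together with some correct process $p_j$ that BCRB-delivers the later message first. Each such chain has a length $L \ge 2$. It is bounded above by the number of distinct correct processes, since a chain through repeated processes can be shortened to its suffix through unique processes, and that suffix already triggers the same race at $p_j$. So I would partition the violation event by the length $L$ of the shortest correct chain witnessing it and write
\[
P(\text{weak safety violation}) \le P(\text{violation via } L=2) + \sum_{L=3}^{n-f} P(\text{violation via length } L).
\]

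The first term is handled by Theorem~\ref{th:ws2analysis}. It gives $\mathcal{O}(\ln^3 f / f^4)$ for a fixed target $p_j$, after the union bound over the intermediate correct process $p_k$. The sum over $L \ge 3$ is handled by Theorem~\ref{th:ws3+analysis}. That theorem already unions over all $(2f)^{L-1}$ chains of each length and sums over $L$ from $3$ to $n-f$, giving $\mathcal{O}(\ln^2 f / f^4)$. Adding the two terms gives $\mathcal{O}(\ln^3 f/f^4) + \mathcal{O}(\ln^2 f/f^4) = \mathcal{O}(\ln^3 f / f^4)$, because $\ln^2 f = o(\ln^3 f)$.

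The step needing care is making sure the hidden constants are uniform. The constants in the per-length bounds of Theorem~\ref{th:ws3+analysis} must not grow with $L$ so as to swamp the $f^{-4}$ factor. I would check this by noting that the summand there is dominated by a geometric series: the factor $(2f)^{L-1}/f^{3L-1}$ decays like $f^{-2L}$. This absorbs the polynomial factor $L^2$, so the sum is governed by its $L=3$ term. The resulting $\mathcal{O}(\cdot)$ is therefore uniform in $f$.

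I would also state explicitly how the claim is scoped: it is per correct destination $p_j$ and per causally related pair $(m_1, m_2)$, matching the scope of Theorems~\ref{th:ws2analysis} and~\ref{th:ws3+analysis}. Finally, I would state that the deadlock-avoidance case, where edges may be omitted, is deferred to Theorem~\ref{th:wsdeadlock}.
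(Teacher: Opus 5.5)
Your proposal is correct and is essentially the paper's own (unwritten) argument: the corollary comes from adding two bounds, with the length-2 term dominating. The first is the length-2 bound $\mathcal{O}(\ln^3 f/f^4)$ of Theorem~\ref{th:ws2analysis}. The second is the aggregated bound $\mathcal{O}(\ln^2 f/f^4)$ over all $L \ge 3$ from Theorem~\ref{th:ws3+analysis}. Your extra remarks go beyond what the paper does and are not needed once Theorem~\ref{th:ws3+analysis} is taken as given: the reduction to chains through distinct processes is really loop removal via process order rather than passing to a suffix, and the geometric-decay check is likewise supplementary.
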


\subsection{Analytical Evaluation of Strong Safety Violation: Algorithm~\ref{alg:bcrb_algorithm},~\ref{alg:bcrb_algorithm_n3} ($\mathcal{O}(\frac{\ln^2 f}{f})$)}

\begin{theorem}[Probabilistic (Strong) Causal Safety]
Let process $p_i$ broadcast message $m_1$, and (correct or Byzantine) process $p_k$ broadcast message $m_2$ after delivering $m_1$. The probability that any correct process $p_j$ delivers $m_2$ before $m_1$ is 
\[
P(\text{CO violation at } p_j) = P(B < A)
\]
where $A$, $B$ are defined in the body of the proof of Theorem~\ref{th:wsformulation}.
\label{th:ssformulation}
\end{theorem}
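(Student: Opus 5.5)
The plan reuses the path framework from the proof of Theorem~\ref{th:wsformulation} and shows that, once $p_k$ may be Byzantine, the gating factor $[1-(1-P(B<A'))(1-P(D<C))]$ can no longer be relied on. What remains is exactly the race $B<A$. We measure time from the same origin as before: the instant the $f+1$ correct processes have sent READY($m_1$), so that the cascade is guaranteed. Path $A$ keeps the same definition, because the BCRB-delivery of $m_1$ at $p_j$ depends only on correct processes completing the READY amplification and then sending their $2f+1$ ACKs. Path $B$ also keeps its form as the best-case time until $m_2$ becomes BCRB-deliverable at $p_j$.

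\textbf{Step 1: a Byzantine $p_k$ cannot make $m_2$ arrive earlier than $B$.} We argue that the timing of $m_2$ is still lower bounded by $B$ in distribution. A Byzantine $p_k$ can only create $m_2 \to m_1$ in the $\to$ sense after it has learned $m_1$. In Algorithm~\ref{alg:bcrb_algorithm} that means collecting $n-f$ shares, which by Theorem~\ref{th:front-running} requires ACKs from at least $f+1$ correct processes; this is the $T_1^{(2)}+T_2^{(2)}$ prefix. After that, $m_2$ must pass through Bracha's INIT/ECHO/READY quorums and collect enough ACKs at $p_j$. Each of these phases needs at least $2f+1$ or $f+1$ correct contributions, so each is still a max of $f+1$ or $2f+1$ exponentials, as in $B$. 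We take the Byzantine processes' own messages as instantaneous, which is the best case already built into $B$. So the adversary's strongest strategy gives exactly the distribution $B$.

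\textbf{Step 2: a Byzantine $p_k$ removes the gating term.} The second factor in Theorem~\ref{th:wsformulation} reflects protection from $p_k$'s direct ACK($m_1$) to $p_j$, which carries $next\_sn$ and creates the edge $(m_2,m_1)$ in $G$. A Byzantine $p_k$ can simply withhold that ACK, or send it with a stale or wrong $next\_sn$ so that the edge is never added; this makes $C=\infty$ effectively. It can also skip its own waiting. The edge-based blocking then never fires, and the bracket equals $1$. A violation at $p_j$ therefore occurs if and only if $m_2$ becomes BCRB-deliverable before $m_1$, and under the phase-independence approximation used earlier this probability is $P(B<A)$.

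\textbf{Main obstacle.} The delicate part is Step 1: showing that no Byzantine strategy can beat $B$. In particular, Byzantine ACKs or READYs must not let $p_k$ claim causal dependency without the $f+1$ correct ACKs. I would handle this with the threshold-decryption argument, since without $n-f$ shares, $m_1$'s content is unknown and front-running is impossible. I would also note that the problem statement's $\to$ counts only messages that $p_k$ actually delivered. This is exactly why the claim is restricted to the cryptographic Algorithms~\ref{alg:bcrb_algorithm} and~\ref{alg:bcrb_algorithm_n3}. By Theorem~\ref{th:bla}, the result then gives $\mathcal{O}(\ln^2 f/f^2)$ per process, and a union bound over the choices of $p_k$ will give the advertised $\mathcal{O}(\ln^2 f/f)$.
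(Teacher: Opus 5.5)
Your Step 2 is exactly the paper's proof. The only change from Theorem~\ref{th:wsformulation} is that a Byzantine $p_k$ need not send the ACK($m_1$) carrying $next\_sn$. So $C=\infty$ and $P(D<C)=1$, the factor $[1-(1-P(B<A'))(1-P(D<C))]$ becomes $1$, and what is left is $P(B<A)$. The paper stops there. It reuses $A$ and $B$ verbatim from Theorem~\ref{th:wsformulation} as the modeled paths, and it makes no claim that a Byzantine $p_k$ cannot do better than $B$.

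Your Step 1, which you flag as the main obstacle, over-claims and is not correct as written. You adopt the convention that Byzantine contributions are already in place, which is what $B$ assumes for the READY and ACK phases of $m_2$. Under that convention, a correct process needs only $f+1$ correct ECHO($m_2$)s to reach the ECHO quorum (about $2f+1$ for $n=3f+1$). The ECHO phase would then be a max of $f+1$ exponentials rather than $2f+1$. Likewise, if ``Byzantine messages are instantaneous'' is applied to $p_k$'s own INIT, the $Y\sim\mathrm{Exp}(\lambda)$ term disappears.

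Either way the adversary obtains a path stochastically shorter than $B$. So ``the adversary's strongest strategy gives exactly the distribution $B$'' does not follow, and the identity should not be presented as a worst case over Byzantine strategies. It holds in the sense the paper intends: $B$ is the modeled best-case path, inherited unchanged, and the only effect of a Byzantine $p_k$ is the loss of the ACK-based gating.

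The part of Step 1 worth keeping is that the prefix $T_1^{(2)}+T_2^{(2)}$ remains meaningful. Decrypting $m_1$ needs $n-2f\ge f+1$ correct shares (Theorem~\ref{th:front-running}). Otherwise, simply inherit $B$ as the paper does. Also, the relation you mean in Step 1 is $m_1\to m_2$, not $m_2\to m_1$.
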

\begin{proof}
Follows from the proof of Theorem~\ref{th:wsformulation}. The only difference is that a Byzantine process $p_k$ may not send ACKs. So $C = \infty$ and $P(D < C)$ = 1. 
\end{proof}

\begin{theorem}[Causal Strong Safety Violation Bound]
Under optimal resilience bounds ($n \ge 3f + 1$), the probability of a causal ordering strong safety violation at any correct process $p_j$ (for a path length of 2: $p_i$ to $p_k$ to $p_j$) is bounded by 
$\mathcal{O}\left(\frac{\ln^2 f}{f}\right)$.
\label{th:ss2analysis}
\end{theorem}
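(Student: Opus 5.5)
The plan is to mirror the proof of Theorem~\ref{th:ws2analysis}, with Theorem~\ref{th:ssformulation} in place of Theorem~\ref{th:wsformulation}. For a Byzantine relay $p_k$, Theorem~\ref{th:ssformulation} says the per-process violation probability collapses to $P(B<A)$. The reason is that $p_k$ may withhold its direct ACK, so $C=\infty$ and $P(D<C)=1$. The bracketed gating factor $1-(1-P(B<A'))(1-P(D<C))$ therefore equals $1$ and no longer contributes its extra $\mathcal{O}(f^{-3}\ln f)$ reduction.

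First, I will invoke Theorem~\ref{th:bla} to get $P(B<A)=\mathcal{O}(\ln^2 f/f^2)$ for a single fixed relay $p_k$ and a fixed destination $p_j$. I will check that the path variables still apply when $p_k$ is Byzantine. The components $T_1^{(2)}$ and $T_2^{(2)}$ model $p_k$ collecting READYs and ACKs from correct processes, and the later BRB phases $Y_3,Y_4,Y_5$ of $m_2$ are driven by correct processes. A Byzantine $p_k$ can therefore only shorten $B$ by skipping its own waiting. I will argue that the cascade terms are still needed for $m_2$ to be BRB- and BCRB-delivered at a correct $p_j$, because delivery needs $f+1$ correct READYs and $n-f$ shares. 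The one thing a Byzantine $p_k$ can do is broadcast $m_2$ right after observing $m_1$. By Theorem~\ref{th:front-running} this cannot happen before enough correct processes have BRB-delivered $m_1$, which matches the worst-case accounting already built into $B$. So the same Rényi/Gumbel asymptotics apply.

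Second, I will take a union bound over the choice of relay. The relay $p_k$ may now be any of the $n$ processes, correct or Byzantine, and with $n=\mathcal{O}(f)$ this is at most $\mathcal{O}(f)$ choices. Multiplying gives $\mathcal{O}(f)\cdot\mathcal{O}(\ln^2 f/f^2)=\mathcal{O}(\ln^2 f/f)$, which is the claimed bound.

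The main obstacle is justifying that the best-case path $B$ still lower-bounds the Byzantine sender's delivery time. Byzantine processes could pre-send READY/ECHO/ACK messages for $m_2$, but $B$ already assumes the $f$ Byzantine contributions arrive instantly. So I will argue that $B$ is exactly the Byzantine-optimal schedule, and that the remaining $2f+1$ ECHO and $f+1$ READY/ACK maxima over correct processes are unavoidable. That preserves the $5\ln f$ versus $3\ln f$ drift in the Gumbel comparison.
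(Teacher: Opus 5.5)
Your proposal is essentially the paper's proof: Theorem~\ref{th:ssformulation} reduces the per-relay violation probability to $P(B<A)$, Theorem~\ref{th:bla} bounds this by $\mathcal{O}(\ln^2 f/f^2)$, and a union bound over the $n-2\sim 3f=\mathcal{O}(f)$ possible relays $p_k$ gives $\mathcal{O}(\ln^2 f/f)$. Your extra re-verification of $B$ for a Byzantine relay is unnecessary, because the paper takes that case directly from Theorem~\ref{th:ssformulation}, and as you phrase it the claim is slightly off: the ECHO term $T_3^{(2)}$ is a maximum over $2f+1$ senders with no Byzantine shortcut, so $B$ is not literally the Byzantine-optimal schedule, though within the paper's Gumbel model this only shifts the drift by $\ln 2$ and leaves the order unchanged.
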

\begin{proof}
Follows from the proof of Theorems~\ref{th:ssformulation} and~\ref{th:bla}. 
\(P(B < A) \leq \mathcal{O}(\frac{\ln^2 f}{f^{2}})
\)
for the path $p_i$ to $p_k$ to $p_j$ but considering all $(n-2) \sim 3f (= \mathcal{O}(f))$ paths of length of 2 from $p_i$ to $p_j$:
\[P(B < A) \leq (3f) \times \mathcal{O}\left(\frac{\ln^2 f}{f^{2}}\right) \leq \mathcal{O}\left(\frac{\ln^2 f}{f}\right)
\]
\end{proof}

\begin{theorem}[Causal Strong Safety Violation for Longer Paths]
Under optimal resiliency bounds ($n \ge 3f+1$), for all causal chains of path length $L$, where $n > L > 2$, passing through correct or Byzantine processes, the sum of the probabilities of a causal ordering strong safety violation at any correct process is strictly smaller than that of path length 2, which is $\mathcal{O}(\frac{\ln^2 f}{f})$,  as the system size scales.
\label{th:ss3+analysis}
\end{theorem}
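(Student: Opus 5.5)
The plan mirrors the proof of Theorem~\ref{th:ws3+analysis}, with the simplifications of Theorem~\ref{th:ssformulation}. For a Byzantine intermediary we set $C=\infty$, so $P(D<C)=1$, and the violation probability for one chain is just $P(B(L)<A)$. I would reuse the chain-length-dependent path from Theorem~\ref{th:ws3+analysis},
\[
B(L) = Y_1 + Y_2 + L\left(Y + Y_3 + Y_4 + Y_5\right).
\]
Path $A = X_1+X_2+X_3$ does not depend on $L$. Byzantine intermediaries can only shorten $B(L)$ by forging readiness: their READY and ACK messages are assumed already in place. Even so, each hop in the chain still needs $2f+1$ correct ECHOs and at least $f+1$ correct READYs and ACKs for BRB/BCRB delivery at the next correct process. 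I would therefore argue that the per-hop maxima in $B(L)$ remain valid lower bounds on the real delay.

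The first step is the Gumbel/Rényi computation from Theorem~\ref{th:ws3+analysis}. The RHS grows as $3\ln f + \ln 2$ and the LHS as $(2+3L)\ln f + L\ln 2$. The net drag is therefore $(3L-1)\ln f$, and the three positive Gumbel terms supply a polynomial prefactor of order $x^2$. This gives
\[
P(B(L)<A) = \mathcal{O}\!\left(\frac{L^2 \ln^2 f}{f^{3L-1}}\right).
\]

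The second step is a union bound over chains. Intermediaries may now be Byzantine and need not be distinct in any helpful way, so I would bound the number of chains of length $L$ from $p_i$ to $p_j$ by $n^{L-1} \le (3f+1)^{L-1} = \mathcal{O}((3f)^{L-1})$, using $n=\mathcal{O}(f)$ near optimal resilience. Summing over $L$ from $3$ to $n-1$ gives
\[
\sum_{L=3}^{n-1} (3f)^{L-1}\,\mathcal{O}\!\left(\frac{L^2\ln^2 f}{f^{3L-1}}\right)
= \sum_{L\ge 3} \mathcal{O}\!\left(\frac{3^{L-1} L^2 \ln^2 f}{f^{2L}}\right).
\]
The ratio of consecutive terms is $\mathcal{O}(9/f^2)$, so the series is geometric and dominated by its $L=3$ term. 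That term is $\mathcal{O}(\ln^2 f / f^6)$, which is strictly smaller than the length-2 bound $\mathcal{O}(\ln^2 f/f)$ of Theorem~\ref{th:ss2analysis}.

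The main obstacle is the constant hidden in the $\mathcal{O}(\cdot)$ of the tail estimate, which is not uniform in $L$. The drag threshold is $x\approx(3L-1)\ln f$, and $L$ runs up to order $f$. I would handle this by bounding the tail of the three-Gumbel convolution explicitly, via a Chernoff bound on $W$ at a fixed exponent $s<1$. This yields $P(W>x)\le c^{L} e^{-s x}$ for a universal $c$, since the MGF of the LHS Gumbel terms contributes a factor at most $c$ per hop. Multiplying by $(3f)^{L-1}$ still gives a geometrically decaying series for large $f$, with ratio $\mathcal{O}(c f^{1-3s})$. Choosing $s$ close to $1$ makes this vanish, so the sum is dominated by $L=3$ and the conclusion holds. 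A second subtlety is that a Byzantine $p_k$ could skip the BCRB layer and BRB-broadcast $m_2$ without waiting for $m_1$'s ACKs. I would handle this by noting that $m_1 \to m_2$ still requires $p_k$ to have BRB-delivered $m_1$. Dropping $Y_2$ from $B(L)$ only reduces the drag by $\ln f$, which the geometric decay in $L$ absorbs.
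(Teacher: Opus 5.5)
Your proposal is essentially the paper's proof: all intermediaries Byzantine so that $P(D<C)=1$, reuse of $P(B(L)<A)=\mathcal{O}(L^2\ln^2 f/f^{3L-1})$ from Theorem~\ref{th:ws3+analysis}, a union bound over $(3f)^{L-1}$ chains, and a sum over $L\ge 3$. Your simplification $(3f)^{L-1}f^{-(3L-1)}=3^{L-1}f^{-2L}$ is more accurate than the paper's $\mathcal{O}(L^2\ln^2 f/f^{L+1})$. Your Chernoff step for uniformity in $L$, which the paper omits, is sound only if you treat the $L$ hop blocks as independent copies; with the literal scaling $L\cdot(Y+Y_3+Y_4+Y_5)$ you get $\mathbb{E}[e^{-sLG}]=\Gamma(1+sL)$, not $c^L$.

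Two caveats, neither of which breaks the conclusion. First, the inherited $B(L)$ has one hop block too many: for $L=2$ it should reduce to the $B$ of Theorem~\ref{th:wsformulation}, which has a single block. The honest leading term is therefore $\mathcal{O}(\ln^2 f/f^3)$ rather than $\mathcal{O}(\ln^2 f/f^6)$, which is still strictly below $\mathcal{O}(\ln^2 f/f)$. Second, your $Y_2$-dropping remark is moot for the cryptographic algorithms, since a Byzantine $p_k$ needs $n-f$ shares, hence at least $n-2f\ge f+1$ correct ACKs, before it can BCRB-deliver $m_1$. If such skipping were possible, it would apply at every Byzantine hop and to the length-2 baseline, not just once.
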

\begin{proof}
Follows from the proofs of Theorem~\ref{th:ssformulation} and similar to the proof of Theorem~\ref{th:ws3+analysis}. 
For a path of length $L$, assume the worst-case scenario that all processes on the path are Byzantine (and even when $L > f$), and none of them send ACKs. Then $P(D < C) = 1$, resulting in the highest probability of CO violation as per Theorem~\ref{th:ssformulation} formula.  The number of paths of length $L$ from $p_i$ to $p_j$ is $n^{L-1} \sim (3f)^{L-1}$ (assuming $n = \mathcal{O}(f)$).
\[
P(\text{CO violation for all paths of length } L) \le (3f)^{L-1} \cdot P(B < A)
\]
\[
\le (3f)^{L-1} \cdot \mathcal{O}(\frac{L^2 \ln^2 f}{f^{3L-1}}) = \mathcal{O}(\frac{L^2 \ln^2 f}{f^{L+1}})
\]
To evaluate the overall strong safety violation probability for paths of all lengths greater than 2, we sum the probabilities for all possible path lengths $L$ from $3$ to the maximum length of paths through unique processes, $n \sim 3f$:
\[\sum_{L=3}^{3f} \mathcal{O}(\frac{L^2 \ln^2 f}{f^{L+1}}) \le \mathcal{O}(\frac{\ln^2 f}{f^2})
\]
The sum is certainly $\mathcal{O}(\frac{\ln^2 f}{f})$, which is the probability of violation for path length 2 (Theorem~\ref{th:ss2analysis}), confirming that longer paths lead to strictly lower order violation probabilities.
\end{proof}

\begin{corollary}
\label{co:sstotalprobability}
The probability of strong safety violation over all path lengths $\geq 2$ is $\mathcal{O}(\frac{\ln^2 f}{f})$.
\end{corollary}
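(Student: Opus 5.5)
The statement to prove is Corollary~\ref{co:sstotalprobability}. The plan is to assemble it from the two preceding results on strong safety, much as Corollary~\ref{co:wstotalprobability} is assembled from Theorems~\ref{th:ws2analysis} and~\ref{th:ws3+analysis}.

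\emph{Step 1: split by path length.} A strong safety violation at a correct process $p_j$ means that some pair $m_1 \to m_2$ is delivered out of order. The causal chain from $m_1$ to $m_2$ has some length $L \ge 2$ and may pass through correct or Byzantine processes. The violation event is therefore the union, over $L \in \{2,\dots,n\}$, of the events ``some chain of length exactly $L$ causes a violation.'' By the union bound, the total probability is at most the sum over $L$ of the per-length probabilities.

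\emph{Step 2: length 2.} For $L=2$ I use Theorem~\ref{th:ss2analysis}. It already takes the union over the $\mathcal{O}(f)$ intermediate processes $p_k$, each of which may be Byzantine. Byzantine processes are handled through Theorem~\ref{th:ssformulation}, which sets $C=\infty$, so $P(D<C)=1$ and the per-path violation probability reduces to $P(B<A)$. This gives a bound of $\mathcal{O}(\ln^2 f/f)$ for the length-2 term.

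\emph{Step 3: lengths $L \ge 3$.} For these lengths I invoke Theorem~\ref{th:ss3+analysis}. Its proof bounds the length-$L$ contribution by $\mathcal{O}(L^2\ln^2 f/f^{L+1})$ and the sum over $L\ge 3$ by $\mathcal{O}(\ln^2 f/f^2)$. The only point to check is that this series remains dominated when $L$ ranges up to $n\sim 3f$. It does, because $L^2/f^{L-1}$ decreases geometrically in $L$ for $f\ge 2$: the ratio of consecutive terms is at most $\bigl((L+1)/L\bigr)^2/f \le 1/2$ for large $f$. The sum is therefore a constant multiple of its first term, $9\ln^2 f/f^4$, which is certainly within $\mathcal{O}(\ln^2 f/f^2)$.

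\emph{Step 4: combine.} Adding the two contributions gives $\mathcal{O}(\ln^2 f/f) + \mathcal{O}(\ln^2 f/f^2) = \mathcal{O}(\ln^2 f/f)$, which is the claim.

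The main obstacle is justifying the union bound across lengths in Step 1. Chains of different lengths share delay variables, namely Path $A$ and the common prefix phases $T_1^{(2)}, T_2^{(2)}$, so the events are not independent. However, the union bound requires no independence, so this is harmless. I will also note that the per-length bounds inherit the paper's Gumbel and Rényi approximations and its assumption $f=\mathcal{O}(n)$. The corollary therefore holds in the same asymptotic sense as Theorems~\ref{th:ss2analysis} and~\ref{th:ss3+analysis}, with no additional approximation introduced.
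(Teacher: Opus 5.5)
Your proposal is correct and follows the paper's route: the corollary is a union bound over path lengths that combines the $\mathcal{O}(\ln^2 f/f)$ length-2 bound of Theorem~\ref{th:ss2analysis} with the $\mathcal{O}(\ln^2 f/f^2)$ bound on the $L\ge 3$ tail from Theorem~\ref{th:ss3+analysis}. Your ratio check of that series, which gives the sharper $\mathcal{O}(\ln^2 f/f^4)$ provided the hidden constants are uniform in $L$, supplies a summation step the paper states without argument.
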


\subsection{Impact of Deadlock Avoidance on Weak Safety Violation ($P = \mathcal{O}(\frac{\ln^3 f}{f^{3}})$)}
\label{sec:deadlock}
\begin{theorem}
\label{th:wsdeadlock}
[Causal Weak Safety Violation Bound Considering Deadlock Avoidance]
Under optimal resilience bounds ($n \ge 3f + 1$), the probability of a causal ordering weak safety violation at any correct process $p_j$, considering impact of deadlock avoidance, 
is bounded by:
\[
P(\text{CO violation at some correct process}) \le \mathcal{O}(\frac{\ln^3 f}{f^{3}})
\]
\end{theorem}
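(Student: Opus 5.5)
The plan is to reuse the decomposition of Theorem~\ref{th:wsformulation} and change only the gating factor to account for the deadlock-avoidance rule in the ACK handler. In Algorithm~\ref{alg:bcrb_algorithm}, when $p_j$ receives $p_k$'s ACK for $m_1=(l,s)$ carrying $next\_sn$, the edge $((k,next\_sn),(l,s))$ is added only if the node for $m_2$ is absent (temporary edge) or $(k,next\_sn).ts > (l,s).ts$. The edge is therefore \emph{not} installed when $m_2$ was BRB-delivered at $p_j$ before $m_1$, i.e.\ when $B' < A'$.

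\textbf{Step 1: the violation event.} A weak-safety violation at $p_j$ for the chain $p_i\to p_k\to p_j$ requires $B<A$. Given $B<A$, the gating edge fails to protect $m_1$ in one of three ways:
\begin{itemize}
\item $m_1$ is not yet BRB-delivered when $m_2$ becomes deliverable ($B<A'$);
\item the direct ACK of $p_k$ arrives too late ($D<C$);
\item the edge is skipped by the timestamp test because $B'<A'$.
\end{itemize}
This gives the formula
\[
P(B<A)\times\bigl[1-(1-P(B<A'))(1-P(D<C))(1-P(B'<A'))\bigr],
\]
using the same phase-independence approximation as in Theorem~\ref{th:wsformulation}.

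\textbf{Step 2: the path-length-2 bound.} Substitute Theorems~\ref{th:bla} and~\ref{th:dlc}. The bracket is at most $\mathcal{O}(\ln f/f^{3})+\mathcal{O}(f^{-4})+\mathcal{O}(\ln f/f^{2})=\mathcal{O}(\ln f/f^{2})$, so the $P(B'<A')$ term dominates. Multiplying by $P(B<A)=\mathcal{O}(\ln^2 f/f^2)$ gives $\mathcal{O}(\ln^3 f/f^4)$ per intermediate process $p_k$. A union bound over the $n-f\sim 2f$ correct choices of $p_k$, as in Theorem~\ref{th:ws2analysis}, yields $\mathcal{O}(\ln^3 f/f^3)$.

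\textbf{Step 3: longer paths.} Repeat the argument of Theorem~\ref{th:ws3+analysis} with the extra factor. For a chain of length $L$, deadlock skipping can occur at any of the $L-1$ hops. Bound that contribution by $(L-1)\cdot\mathcal{O}(L\ln f/f^{2})$, obtained by redoing the Gumbel drag computation of Theorem~\ref{th:bla} with $B'(L)$. Combined with $(2f)^{L-1}\cdot\mathcal{O}(L^2\ln^2 f/f^{3L-1})$, each term is $\mathcal{O}(\mathrm{poly}(L)\ln^3 f/f^{2L+2})$. Summed over $L\ge 3$, this is $o(\ln^3 f/f^3)$. Adding the $L=2$ bound gives the statement, and the bound also covers every correct $p_j$.

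\textbf{Main obstacle.} The hardest step is justifying that the deadlock-avoidance event enters only multiplicatively with $P(B<A)$, rather than as an additive $P(B'<A')$ term. The events $B<A$ and $B'<A'$ share the variables $X_1,X_2,Y_1,\dots,Y_4$, so they are strongly positively correlated and independence fails. My first attempt will be to bound the joint probability directly through the Gumbel representation: $B'<A'$ already forces $B<A$ up to the $X_3$ versus $Y_5$ difference, which is $\mathcal{O}(1)$. If that reduction works, I will instead bound $P(B'<A')\cdot\mathcal{O}(\text{tail})$ by a single residue calculation on the combined variable. This should still give $\mathcal{O}(\ln^3 f/f^{4})$ per $p_k$, matching the claimed bound after the union bound. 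If it does not, I will fall back to the independence approximation that the paper already adopts in Theorem~\ref{th:wsformulation}.
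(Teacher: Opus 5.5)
Your Steps 1--2 are the paper's proof. Expanding the paper's expression
\[
P(B<A)\,P(B'<A') + P(B<A)\bigl[1-(1-P(B<A'))(1-P(D<C))\bigr]\bigl(1-P(B'<A')\bigr)
\]
gives exactly your factor $P(B<A)\bigl[1-(1-P(B<A'))(1-P(D<C))(1-P(B'<A'))\bigr]$. The paper then substitutes Theorems~\ref{th:bla} and~\ref{th:dlc} to get $\mathcal{O}(\ln^3 f/f^4)$ per $p_k$, and union-bounds over the $\sim 2f$ choices of $p_k$, just as you do. The paper does not do your Step~3; it treats only chains of length 2. That extension is harmless. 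One small correction: $(2f)^{L-1}$ contributes a factor $2^{L-1}$, not a polynomial in $L$, but $f^{-2L}$ absorbs it.

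What would fail is the plan in your last paragraph. A direct joint computation cannot return $\mathcal{O}(\ln^3 f/f^4)$ per $p_k$. Write $A=A'+X_3$ and $B=B'+Y_5$, and note that $X_3,Y_5$ do not occur in $A',B'$. Then
\[
P(B<A,\;B'<A') \;\ge\; P(B'<A')\,P(Y_5<X_3) \;=\; \frac{2f+1}{3f+2}\,P(B'<A') \;\ge\; \tfrac12\,P(B'<A'),
\]
because $Y_5<X_3$ is the event that the largest of $3f+2$ i.i.d.\ exponentials falls in $X_3$'s group of $2f+1$.

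So the joint probability is within a constant factor of $P(B'<A')$. The Gumbel computation behind Theorem~\ref{th:bla} (drag $2\ln f$, two positive Gumbels) gives that as order $\ln f/f^2$. For a single $p_k$, this already exceeds $\ln^3 f/f^3$ once $f$ is large. Your own remark, that $B'<A'$ forces $B<A$ up to an $\mathcal{O}(1)$ term, is exactly why an exact computation in this model refutes the rate rather than confirming it.

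The stated bound comes out only if the joint probability is written as the product $P(B<A)\cdot P(B'<A')$. The paper does this without comment, even though the two events share $X_1,X_2,Y,Y_1,\dots,Y_4$. So your ``fallback'' is not a fallback; it is the only route to the statement. Present it as an explicit independence assumption rather than expecting a residue calculation to justify it.
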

\begin{proof}
A deadlock cycle may arise if there is (at least) one Byzantine process $p_b$ that sends $ACK(m=(l,s), next\_sn, \ldots)$ where it has already BCRB-broadcast  $(m_b=(b,next\_sn'), pl)$ and $next\_sn' \geq next\_sn$, and there is a (possibly transitive) causal dependency of $m$ on $m_b$ (i.e., there is a path from $m$ to $m_b$ in $G$) because $m_b$ is BRB-delivered and processed at $p_l$ before $p_l$ BCRB-broadcast $m$. Thus there is a path in $G$ from $m_b$ to $m$ (due to edge or path $(m_b,m)$) and from $m$ to $m_b$, thereby completing a cycle. 

For a true dependency edge $((z,s_z),(y,s_y))$ to form, the physical time of BRB-broadcast($(z,s_z)$) $>$ physical time of BRB-broadcast($(y,s_y)$). However, a third observer process $p_j$ may observe these two broadcasts in any order based on the completion times of the BRB-deliveries at $p_j$, and this can lead to observing a cycle. Algorithm~\ref{alg:bcrb_algorithm} uses deadlock avoidance: it assigns local physical clock timestamps ($.ts$) to nodes in $G$ at the time they are locally formed (BRB-delivery processed); an edge $((z,s_z),(y,s_y))$ is not added to $G$ if $(z,s_z).ts < (y,s_y).ts$. This guarantees that $G$ is always a DAG.  

To analyze the impact of the non-addition of a correct edge on the probability of weak safety violation in a worst-case scenario, it is sufficient to analyze the following scenario, also considered in the setting of Theorem~\ref{th:wsformulation}. Let correct process $p_i$ BCRB-broadcast message $m_1$, and correct process $p_k$ BCRB-broadcast message $m_2$ after BCRB-delivering $m_1$. The true dependency edge $(m_2,m_1)$ is not added to $G$ at correct observer $p_j$ if $m_2.ts < m_1.ts$. The probability of this happening is $P(B'<A')$, where $B',A'$ were defined in the proof of Theorem~\ref{th:wsformulation} and $P(B'<A') = \mathcal{O}(\frac{\ln f}{f^2})$ was computed in Theorem~\ref{th:bla}.

Therefore, the probability of weak safety violation considering non-addition of edge $(m_2,m_1)$ at $p_j$ is:
\[
P(WSVDA) = P(\text{weak safety violation with deadlock avoidance})
\]
\[
= P(B<A)\cdot P(B'<A')  + P( stmt. \; of \; Theorem~\ref{th:wsformulation}) \cdot (1- P(B'<A'))  
\]
\[= \mathcal{O}(\frac{\ln^2 f}{f^{2}}) \cdot \mathcal{O}(\frac{\ln f}{f^{2}}) + \mathcal{O}(\frac{\ln^3 f}{f^{5}}) \cdot (1 - \mathcal{O}(\frac{\ln f}{f^{2}})) = \mathcal{O}(\frac{\ln^3 f}{f^{4}})
\]
As $p_k$ in our scenario could be any of $n-f \simeq 2f$ correct processes, using union-bound over $f$ processes:
\[
P(\text{WSVDA}) = 2f\cdot \mathcal{O}(\frac{\ln^3 f}{f^{4}}) = \mathcal{O}(\frac{\ln^3 f}{f^{3}})
\]
Thus the probability of weak safety violation when considering deadlock avoidance increases from $\mathcal{O}(\frac{\ln^3 f}{f^{4}})$ to $\mathcal{O}(\frac{\ln^3 f}{f^{3}})$. 
\end{proof}
The probability of strong safety violation even considering deadlock avoidance remains unaffected at $\mathcal{O}(\frac{\ln^2 f}{f})$.

\subsection{Validity, Agreement, and Integrity}
\label{sec:vai}
\begin{theorem}
\label{th:ordering}
If a correct process $p_i$ adds edge $((z, next\_sn), (M' = (x,s_x)))$ in $G$, then $M'$ has been BRB-broadcast by $p_x$ before broadcast of $ACK(M', next\_sn, share, z, h)$ by $p_z$. 
\end{theorem}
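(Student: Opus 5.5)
The argument traces back, through the guards of Algorithm~\ref{alg:bcrb_algorithm}, the only code path that can insert such an edge. Edges of the form $((z,next\_sn),(x,s_x))$, or their temporary versions later relabelled, are created only in the handler \emph{upon receive ACK}$(M', next\_sn, share, z, h)$ with $M'=(x,s_x)$. The FIFO edges $((j,sn),(j,sn-1))$ created in \emph{upon brb\_deliver} are the special case $z=x$, which I treat separately. So the first step is a case split on the code line that added the edge. For the ACK case, $p_i$ adds the edge only after the \textbf{wait until} that $M'$ is BRB-delivered, and only if $M'\in N$ and $h=hash(M'.ciphertext)$. Hence $p_i$ has BRB-delivered $M'$. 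The relabelling of $(z,next\_sn)^{temp}$ to $(z,next\_sn)$ does not change the target endpoint, so it inherits this property.

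Next I use the BRB properties. Since the correct process $p_i$ BRB-delivered $M'=(x,s_x)$, the Integrity and Agreement properties of the underlying (Bracha) BRB imply that $M'$ was in fact BRB-broadcast: if $p_x$ is correct, Integrity says delivery happens only if $p_x$ broadcast it. If $p_x$ is Byzantine, "BRB-broadcast by $p_x$" is read as $p_x$ sending its \texttt{init} for $M'$. Delivery at a correct process requires $2f+1$ READYs, so at least one correct process sent READY. By the standard Bracha argument, some correct process first received $\frac{n+f}{2}$ ECHOs, hence at least one correct process ECHOed after receiving \texttt{init}$(M')$ from $p_x$. So the BRB-broadcast of $M'$ exists as an event.

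The key step, and the main obstacle, is ordering that broadcast before $p_z$'s broadcast of the ACK. If $p_z$ is correct, it sends $ACK(M',\dots)$ only inside its own \emph{upon brb\_deliver}$(x,s_x,\cdot)$, after BRB-delivering $M'$. BRB-delivery is causally after the BRB-broadcast, so the claim follows from the happens-before relation $\to_{brb}$.

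If $p_z$ is Byzantine, it could try to send an ACK naming $M'$ before $M'$ exists. I will argue that the hash check $h=hash(M'.ciphertext)$ forces $p_z$ to know the ciphertext of $M'$. The ciphertext is produced by $p_x$, since encryption is randomized under $\mathit{PK}$ and cannot be guessed, so it is only available after $p_x$ emits \texttt{init}$(M')$. The exception is $p_z=p_x$, or collusion with $p_x$. In that case I interpret "BRB-broadcast" as the moment $p_x$ commits to the content, and then note that the ACK and the \texttt{init} are both events of the colluding coalition. This subcase is where the statement's precise meaning must be fixed; I would state the collision-resistance / unpredictability assumption on $hash$ explicitly.

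Finally, the FIFO-edge case $z=x$, $next\_sn=s_x+1$ is immediate. $p_x$'s broadcast of $(x,s_x)$ precedes its broadcast of $(x,s_x+1)$ by sequence numbering, and this combines with the above.
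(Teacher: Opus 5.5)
Your proposal is correct and follows essentially the same route as the paper. The correct-$p_z$ case is read off the pseudo-code (the ACK is sent only inside $p_z$'s own BRB-delivery handler for $M'$). The Byzantine-$p_z$ case rests on the check $h = hash(M'.ciphertext)$ together with the unpredictability of the randomized ciphertext. Your Bracha argument that a BRB-delivered $M'$ was actually initiated is extra rigor that the paper omits.

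Of your other additions, the colluding $p_z$/$p_x$ subcase you flag is a genuine hole, and the paper's proof also passes over it. A Byzantine $p_x$ can hand its ciphertext to $p_z$ before initiating the broadcast, so the literal statement needs either your reinterpretation or a restriction to correct $p_x$. The FIFO-edge case, by contrast, is outside the statement's scope, since no ACK is involved. Your ``by sequence numbering'' justification there would also fail for a Byzantine $p_x$, who may broadcast $(x,s_x+1)$ before $(x,s_x)$.
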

\begin{proof}
If $p_z$ is a correct process, the theorem follows from the Algorithm pseudo-code. So consider that $p_z$ is Byzantine. 

In receive ACK processing, the hash of $M'.ciphertext$ sent on ACK as parameter $h$ must match the hash $M'.ciphertext$ as computed by $p_i$ on BRB-delivery of $M'$. This ensures $M'$ was BRB-broadcast before broadcast of $ACK(M')$ in order for $p_z$ to compute and piggyback the correct hash of $M'$ on $ACK(m')$. 
\end{proof}

\begin{theorem}[BCRB Delivery]
\label{th:bcrb_agreement_standalone}
If a message $m_1 = ((b_1, s_1), \text{payload})$ from process $b_1$ (whether $b_1$ is a Byzantine or correct process) is inserted into the local $\texttt{pending}$ set of any correct process $p_i$ 
then $m_1$ will eventually be BCRB-delivered by every correct process.
\end{theorem}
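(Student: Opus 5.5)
The plan is to argue by strong induction over the dependency structure, using the fact that the graph $G$ at every correct process is a DAG; this acyclicity is guaranteed by the timestamp-based deadlock avoidance discussed in Section~\ref{sec:deadlock}. I would proceed in three steps, one for each of the three delivery conditions in \texttt{check\_delivery()}: FIFO, quorum, and the absence of outgoing edges.

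\textbf{Step 1 (the message reaches every correct process).} Since $m_1$ was inserted into $pending$ at a correct $p_i$, $p_i$ BRB-delivered $m_1$ and, by the FIFO wait, also all $(b_1,s')$ with $s'<s_1$. By BRB Agreement, every correct process eventually BRB-delivers $m_1$ and all its FIFO predecessors. The wait conditions in the \texttt{brb\_deliver} handler are therefore eventually satisfied everywhere, so every correct process inserts $m_1$ and its predecessors into $pending$.

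\textbf{Step 2 (the quorum condition is eventually met).} Each correct process then broadcasts an ACK for $m_1$ with a correct share and matching hash. There are at least $n-f$ correct processes, and their ACKs arrive over reliable channels. At a correct receiver, the freshness check $V[k]<next\_sn$ holds for a correct $k$ at the moment the ACK is processed. This is because $k$'s $next\_sn$ exceeds every sequence number $k$ had broadcast, and hence every one the receiver could already have delivered from $k$, given FIFO and Integrity. Consequently $|shares[M]|\ge n-f$ eventually holds. For Algorithm~\ref{alg:bcrb_algorithm}, decryption then succeeds by the threshold scheme.

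\textbf{Step 3 (outgoing edges are eventually removed).} I would use induction on the height of the node in the finite DAG $G$ (equivalently, on the timestamp order, since edges point from larger to smaller $ts$). Every outgoing edge of $(b_1,s_1)$ points to a node $(l,s)$ that is in $N$. Such a node is either a FIFO predecessor, or, by Theorem~\ref{th:ordering}, a message that was BRB-delivered, and hence inserted into $pending$, at this correct process. Such a node has a strictly smaller timestamp, so by the induction hypothesis, applied to the instances of the theorem for these targets, it is eventually delivered and its incident edges are deleted. Temporary nodes $(k,x)^{temp}$ only carry outgoing edges and never block other nodes until they are relabeled. After relabeling, the edge is kept only if the timestamps order correctly, which preserves acyclicity. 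Because the induction is over a well-founded order, the remaining FIFO condition $\mathbf{V}_i[b_1]=s_1-1$ follows in the same way from delivery of $(b_1,s_1-1)$.

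The main obstacle is Step 3. Edges can keep being added after the insertion of $m_1$, so the DAG seen at any one moment does not bound everything that can later block $m_1$. I would address this by noting that new edges out of $(b_1,s_1)$ only arise from ACKs whose $next\_sn$ equals $s_1$ and whose sender is $b_1$. Each such edge targets a node with a smaller timestamp, so only finitely many can exist, namely those messages with timestamps below $(b_1,s_1).ts$. This makes the induction well-founded over timestamps rather than over a fixed graph snapshot. A Byzantine $b_1$ might try to flood this with ACKs naming distinct $M$. Even so, every target must be an already-present node with a smaller timestamp, and such nodes are finite in number and are eventually delivered.
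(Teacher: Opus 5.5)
Your Step~2 rests on a false claim. The guard $\mathbf{V}_i[k] < next\_sn$ is evaluated when $p_i$ \emph{processes} the ACK, not when $p_k$ sends it. The handler first waits until $M$ and its FIFO predecessors are BRB-delivered at $p_i$, which can happen arbitrarily later. Your bound on what $p_k$ had broadcast only holds at the moment of sending.

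Here is a concrete execution. A correct $p_k$ sends $ACK(M,x,\ldots)$ with $x = local\_sn+1$ and then BCRB-broadcasts $(k,x)$.
\begin{itemize}
\item At $p_i$, $(k,x)$ can be BRB-delivered before $M$. FIFO on the link $p_k \to p_i$ only orders $p_k$'s own transmissions, whereas $(k,x)$ can complete through $2f+1$ READYs relayed by other processes.
\item $(k,x)$ can then collect $n-f$ shares. Since the edge $((k,x),M)$ is not yet in $E$, it can be BCRB-delivered, which sets $\mathbf{V}_i[k]=x$.
\item When $M$ finally arrives, the held ACK fails the guard, and $p_k$'s share for $M$ is discarded permanently.
\end{itemize}
Take $n=3f+1$ and let the $f$ Byzantine processes withhold their ACKs for $M$. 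Then $p_i$ records at most $n-f-1$ shares, so $quorum\_ok$ never holds for $M$ at $p_i$. This happens even though $M$ sits in $p_i$'s pending set and may come from a correct sender.

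This gap cannot be closed inside the proof. As written, Algorithms~\ref{alg:bcrb_algorithm} and~\ref{alg:bcrb_non_crypto_algorithm} violate the statement. The paper's own proof has the same hole: it passes directly from \emph{receives at least $n-f$ ACKs} to $quorum\_ok$ without examining the guard. A natural repair is to record the share whenever $M\in N$ and the hash matches, and to apply $\mathbf{V}_i[k]<next\_sn$ only to the edge insertion. Then every correct ACK for a still-pending $M$ is counted, and your Step~2 holds.

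With that repair, your Step~1 matches the paper, and your Step~3 is a tighter argument than the paper's. The paper only says that paths from $m_1$ to leaf nodes eventually shrink because extensions are bounded by the size of the history. Your strong induction on timestamps makes this precise:
\begin{itemize}
\item every edge runs from a newer node to an older one;
\item new edges out of a node can only target the finitely many older nodes;
\item each of those is eventually delivered and never re-inserted.
\end{itemize}

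Two small corrections to Step~3. First, relabeling a temporary node does not re-check timestamps. The order holds automatically, because the relabeled node receives the current time, and a target delivered earlier has already taken its incident edges with it. Second, the fact that edge targets are concrete nodes already in $N$ comes from the wait and the $M\in N$ test, not from Theorem~\ref{th:ordering}.
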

\begin{proof}
If message $m_1 = ((b_1, s_1), ciphertext)$ is 
inserted into $\texttt{pending}$ at any correct process $p_i$ (implying $m_1$ was BRB-delivered at $p_i$), then all  $m_{1'}$ = $(b_1, s_{1'}) | s_{1'} \leq s_1$, are also BRB-delivered locally and inserted in $pending$. By the Agreement property of BRB layer, $m_1$ and all $m_{1'}$ are also eventually BRB-delivered at every correct process $p_k$ and added to $p_k$'s local $\texttt{pending}$ set. 

Upon BRB delivery of $m_1$, every correct process $p_k$ broadcasts an $ACK(m_1, next\_sn, \-share, k, h)$. 
As there are at least $n - f$ correct processes, every correct process $p_k$ eventually receives at least $n - f$ ACKs/shares for $m_1$ (and $m_{1'}$), satisfying $quorum\_ok$. 

As $G$ is acyclic (and hence a DAG), all paths from $m_1$ lead to leaf nodes $m=(b,s_b)$ at any point in time. $m$ satisfies $fifo\_ok$ because all lower seq numbered messages (from $b$) have been BCRB-delivered and removed from $G$. Within bounded time $m$ will satisfy $quorum\_ok$ by the reasoning in the above para. Thus it will be BCRB-delivered and $m$ and its incoming edges deleted from $G$ leading to shorter paths to new leaf nodes. However if $quorum\_ok$ is not satisfied at this point in time, an outgoing edge $(m,m')$ may be added on receipt of ACK($m',s_b, share, b, h$), provided the earlier sent $m'$ than $m$ (by Theorem~\ref{th:ordering}) is BRB-delivered and processed but not yet BCRB-delivered. This leads to a new leaf node that satisfies $fifo\_ok$ and guaranteed to satisfy $quorum\_ok$. Such extensions to the path(s) from $m$ are bounded by the size of the history, which contains a bounded number of messages. Eventually the paths from $m$ must shrink and $m$ will be BCRB-delivered and along with its incoming edges removed from $G$.

The same logic shows that the new sink nodes reachable via shorter paths from $m_1$ will be BCRB-delivered and the paths from $m_1$ will inductively keep shrinking until $m_1$ gets BCRB-delivered.
\end{proof}

\begin{theorem}
Algorithms~\ref{alg:bcrb_algorithm},~\ref{alg:bcrb_non_crypto_algorithm} satisfy Validity, Agreement, and Integrity at the BCRB layer.
\end{theorem}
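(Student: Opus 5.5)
I will prove the three properties separately, reducing each to the corresponding property of the underlying BRB layer together with Theorem~\ref{th:bcrb_agreement_standalone}. The key observation is that a message enters the local \texttt{pending} set of a correct process exactly when its \texttt{brb\_deliver} handler finishes executing. The wait at the start of that handler only blocks until lower-numbered messages of the same sender are BRB-delivered, and that condition is itself eventually met by BRB Agreement applied to the earlier messages. Hence Theorem~\ref{th:bcrb_agreement_standalone} turns any BRB-level guarantee that forces a message into some correct process's \texttt{pending} set into a BCRB-level delivery guarantee at every correct process.

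\emph{Validity.} Let a correct $p_i$ invoke \texttt{bcrb\_broadcast}$(payload)$.
\begin{enumerate}
\item $p_i$ calls \texttt{brb\_broadcast}$(i, local\_sn, ciphertext)$ with consecutive sequence numbers.
\item By BRB Validity, every correct process eventually BRB-delivers this message and all of $p_i$'s earlier ones. In particular the FIFO wait in the \texttt{brb\_deliver} handler terminates, and the message is inserted into \texttt{pending}.
\item Theorem~\ref{th:bcrb_agreement_standalone} then gives eventual \texttt{bcrb\_deliver} at every correct process.
\item For the crypto version, decryption of the gathered $n-f$ shares returns the original $payload$, by correctness of the threshold scheme. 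For Algorithm~\ref{alg:bcrb_non_crypto_algorithm}, decryption is the identity.
\end{enumerate}

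\emph{Agreement.} Suppose some correct $p_j$ triggers \texttt{bcrb\_deliver}$(l,s,\cdot)$. Delivery only happens from \texttt{pending}, so the message was inserted into $p_j$'s \texttt{pending} set, and Theorem~\ref{th:bcrb_agreement_standalone} applies directly. I also need that all correct processes deliver the same content. BRB Agreement and Integrity (no equivocation for Bracha's BRB) give that every correct process BRB-delivers the identical ciphertext for $(l,s)$. The threshold scheme guarantees that any $n-f$ valid shares decrypt to the same plaintext. Shares are only accepted after the hash check against the locally BRB-delivered ciphertext, so a Byzantine share cannot be attributed to a different ciphertext.

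\emph{Integrity.} At most one delivery per $(l,s)$ follows from the deletion of $m$ from \texttt{pending} at delivery time. By BRB Integrity, $(l,s)$ is BRB-delivered, and hence inserted into \texttt{pending}, at most once. In addition, $fifo\_ok$ requires $\mathbf{V}_i[l]=s-1$, and $\mathbf{V}_i[l]$ is set to $s$ upon delivery, so a second delivery of $(l,s)$ is impossible. If the sender $p_l$ is correct, BRB Integrity ensures that $(l,s,ciphertext)$ was actually BRB-broadcast by $p_l$. That call occurs only inside \texttt{bcrb\_broadcast}, so the plaintext was BCRB-broadcast by $p_l$.

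\emph{Main obstacle.} I expect the delicate step to be showing that the FIFO wait in the \texttt{brb\_deliver} and receive-ACK handlers cannot block forever when the sender is Byzantine, for instance because the sender skipped a sequence number. My first attempt is to argue that an insertion into \texttt{pending} at one correct process certifies that all lower-numbered messages were BRB-delivered there, and BRB Agreement propagates those deliveries everywhere. A message whose predecessors never get BRB-delivered is never inserted into any \texttt{pending} set, so it never reaches \texttt{bcrb\_deliver} and imposes no obligation under Agreement. Validity only concerns correct senders, whose sequence numbers are gap-free.
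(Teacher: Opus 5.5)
Your proposal is correct and follows essentially the same route as the paper. Validity comes from BRB Validity together with the gap-free sequence numbers of a correct sender, which place the message in every correct \texttt{pending} set so that Theorem~\ref{th:bcrb_agreement_standalone} applies; Agreement follows directly from that theorem, since delivery only happens out of \texttt{pending}; and Integrity follows from the monotone $\mathbf{V}_i$ making $fifo\_ok$ fail on duplicates, plus BRB sender authenticity. Your extra step showing that all correct processes decrypt the same plaintext goes beyond the paper's proof, but it tacitly assumes the $n-f$ counted shares are valid and come from distinct processes. The hash check in the pseudo-code does not by itself enforce this, and the paper likewise just assumes it.
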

\begin{proof}
Follows from Theorem~\ref{th:bcrb_agreement_standalone} and the properties of the underlying BRB layer:
\begin{itemize}
    \item \textbf{Validity:} If a correct process $p_i$ BCRB-broadcasts $m$, by Validity of the underlying BRB layer, $m$ is BRB-delivered to each correct process. It will be inserted into $\texttt{pending}$ at all correct processes as $p_i$ must have BCRB-broadcast all messages with lower sequence numbers in order, which will also be BRB-delivered in order and placed in their $\texttt{pending}$ at all correct processes. By Theorem~\ref{th:bcrb_agreement_standalone}, $m$ is eventually BCRB-delivered by every correct process.
    \item \textbf{Agreement:} If a correct process $p_i$ BCRB-delivers $m$, $m$ was BRB-delivered and placed in $\texttt{pending}$ at $p_i$. By Theorem~\ref{th:bcrb_agreement_standalone}, $m$ is eventually BCRB-delivered by every correct process.
    \item \textbf{Integrity:} At-most-once delivery is guaranteed because $\mathbf{V}_k[j]$ strictly increments upon delivery, causing $fifo\_ok$ to evaluate to \texttt{False} for any duplicate processing. Authenticity for correct senders follows from the Integrity property of the underlying BRB layer: since correct processes only send messages via \texttt{bcrb\_broadcast}, and the underlying BRB layer guarantees sender authenticity, a Byzantine process directly invoking \texttt{brb\_broadcast} (bypassing the BCRB layer) cannot forge messages from a correct sender.
\end{itemize}
\end{proof}

A Byzantine sender can broadcast conflicting 
messages with arbitrary sequence numbers or omit
sending ACKs selectively. This can at worst result in the Byzantine sender’s own message stream being blocked or halted permanently at correct processes. The impact on safety was already accounted for in violation bounds.

\section{Weak Safety Guarantee via BRB-Gated ACKs}
\label{sec:detweaksafety}
\begin{algorithm*}[tp]
\small
\SetKwComment{Comment}{// \textit{}}{}
\SetKwFunction{BCRBBroadcast}{bcrb\_broadcast}
\SetKwFunction{BRBBroadcast}{brb\_broadcast}
\SetKwFunction{CheckDelivery}{check\_delivery}
\SetKwFunction{BRBDeliver}{brb\_deliver}
\SetKwFunction{ReceiveACK}{receive ACK}
\SetKwProg{Fn}{procedure}{:}{}
\SetKwProg{On}{upon}{:}{}

\caption{Byzantine Causal Reliable Broadcast (BCRB) with BRB of ACKs (Process $i$)}
\label{alg:bcrb_algorithm_n3}

\begin{multicols}{2}
\SetInd{0.2em}{0.5em}

\textbf{state variables:} \\
  $local\_sn \gets 0$ {\footnotesize \Comment*{Seq. num. for process $i$'s broadcasts}}
  $\mathbf{V}_i \gets [0, \dots, 0]$ {\footnotesize \Comment*{Vec. of BCRB-delivered seq. nos.}}
  $pending \gets \emptyset$ {\footnotesize \Comment*{processed BRB-delivery, pending BCRB delivery}}
  $shares \gets \text{Array of } \emptyset$ {\footnotesize \Comment*{Maps msg ID to decryption shares}}
  $G=(N,E) \gets (\emptyset,\emptyset)$ {\footnotesize \Comment*{graph of dependencies between msg IDs}}

\BlankLine
\Fn{\BCRBBroadcast{payload}}{
    $local\_sn \gets local\_sn + 1$\;
    $ciphertext \gets \text{encrypt}(payload, \mathit{PK})$\;
    \BRBBroadcast{$(i, local\_sn, ciphertext)$}\;
}

\BlankLine
\Fn{\CheckDelivery{}}{
    $progress \gets \text{True}$\;
    \While{$progress = \text{True}$}{
        $progress \gets \text{False}$\;
        \ForEach{$m = (M = (j, sn), payload) \in pending$}{
            $fifo\_ok \gets (\mathbf{V}_i[j] = sn - 1)$\;
            $quorum\_ok \gets \text{True}$\;
            \If{$payload$ is not ACK}{
                $quorum\_ok \gets (|shares[M]| \ge n - f)$\;
            }
            \If{$fifo\_ok \land quorum\_ok \land (j,sn) \text{ has no outgoing edge in } E$}{
                $pending \gets pending \setminus \{m\}$\;
                delete $M$ from $N$ and all incident edges in $E$\;
                $\mathbf{V}_i[j] \gets sn$\;
                
                \If{payload is not an ACK}{
                    $plaintext \gets \text{decrypt}(payload, shares[M], \mathit{PK})$\;
                    $\text{bcrb\_deliver}(j, sn, plaintext)$\;
                }
                $progress \gets \text{True}$\;
                \textbf{break}\;
            }
        }
    }
}

\columnbreak

\BlankLine
\On{\BRBDeliver{j, sn, payload}}{
    \textbf{wait until } all $M = (j,s), s < sn$ are brb-delivered and execution of brb-deliver is completed\;
    \If{payload is ACK[$(M \gets (k,sn^k)), share, h]$}{
       \textbf{wait until } all $M' = (k,s'), s' \le sn^k$ are brb-delivered and execution of brb-deliver is completed\;
       \If{$M \in N \land h = hash(M.ciphertext)$}{
            $shares[M] \gets shares[M] \cup \{share\}$\;
            $E \gets E \cup \{((j,sn+1)^{temp},(k,sn^k))\}$\;
       }
    }
    \Else{
        $local\_sn \gets local\_sn + 1$\;
        $share \gets \text{dec\_share}(payload, \mathit{SK}_i)$\;
        $h \gets hash(payload)$\;
        \BRBBroadcast{i, local\_sn, ACK[(j,sn),  share, h]}\; 
    }
    \If{$(j,sn)^{temp}$ \text{ exists }}{
        relabel it as $(j,sn)$\;
    }
    \Else{
        add $(j,sn)$ to $N$\;
    }
    \If{$(j, sn-1) \in N$}{
        add edge $((j,sn),(j,sn-1))$ to $E$\;
    }
    $pending \gets pending \cup \{((j, sn), payload)\}$\;
    \CheckDelivery{}\;
}

\end{multicols}
\end{algorithm*}

\begin{algorithm}[th]
\caption{Non-Cryptographic Byzantine Causal Reliable Broadcast (BCRB) with BRB of ACKs}
\label{alg:bcrb_non_crypto_algorithm_n3}
\small
Same as Algorithm~\ref{alg:bcrb_algorithm_n3} except: (1) ciphertext is identical to plaintext, i.e., encryption/decryption are idempotent operations, and (2) decryption share is set to the sender process ID. \\
\end{algorithm}

Algorithms~\ref{alg:bcrb_algorithm_n3},~\ref{alg:bcrb_non_crypto_algorithm_n3} are like Algorithms~\ref{alg:bcrb_algorithm},~\ref{alg:bcrb_non_crypto_algorithm}, resp.. Difference is that rather than sending ACKs point-to-point, they are sent via BRB. Specifically, before a correct $p_k$ BCRB-broadcasts a subsequent application message $m_2$ (and hence BRB-broadcast it), $p_k$ initiates the BRB of the $ACK(m_1)$ for the predecessor message $m_1$ that has been BRB-delivered and that processing completed.
On BRB-delivery of the ACK, it is subject to $fifo\_ok$ check.
This $fifo\_ok$ check ensures that BRBs (of ACK($m_1$) and of $m_2$) from $p_k$ are processed in FIFO order (globally) and $m_2$ is BCRB-delivered after BRB-delivered ACK($m_1$) is processed and 
the dependency of $m_2$ on $m_1$ is registered.
This guarantees 100\% weak safety. While the message overhead remains constant-size $\mathcal{O}(1)$ and the probability of strong safety violation remains the same, running a BRB instance for every ACK increases the number of messages sent to $\mathcal{O}(n^3)$. Thus the   communication word complexity is $\mathcal{O}(n^3)$; the application payload still gets sent only on $\mathcal{O}(n^2)$ messages. 
Explanation is in Appendix~\ref{sec:expl_crypto_n3}. Correctness proof is in Appendix~\ref{sec:proof_crypto_n3}.

\section{Conclusions}
\label{sec:concl}
We proposed the first $\mathcal{O}(1)$ metadata overhead BCRB algorithm with $\mathcal{O}(n^2)$ communication word  complexity. We also proved bounds on the probabilities of violation of strong safety and of weak safety. 
Such bounds for our low-cost solutions are particularly attractive for non-critical applications like social networks. Our algorithms and associated safety violation bounds (even for Algorithms~\ref{alg:bcrb_algorithm_n3},~\ref{alg:bcrb_non_crypto_algorithm_n3} which are $\mathcal{O}(n^3)$ algorithm variants) represent a good trade-off against more expensive $\mathcal{O}(n^3)$ communication word complexity algorithms \cite{Cachin2001,DBLP:journals/tcs/AuvolatFRT21}. Additionally, Cachin et al. \cite{Cachin2001} uses randomized consensus and is not throughput-scalable whereas Auvolat et al. \cite{DBLP:journals/tcs/AuvolatFRT21} has a high probability of strong safety violations (not analyzed by them) because of not just front-running attacks but also other attacks like arbitrarily erasing/manipulating causal barriers. 

\sloppy

\balance
\bibliographystyle{ACM-Reference-Format}
\bibliography{references}

@article{DBLP:journals/ppl/ImbsR16,
  author       = {Damien Imbs and
                  Michel Raynal},
  title        = {Trading off {t}-Resilience for Efficiency in Asynchronous Byzantine Reliable Broadcast},
  journal      = {Parallel Process. Lett.},
  volume       = {26},
  number       = {4},
  pages        = {1650017:1--1650017:8},
  year         = {2016},
  doi          = {10.1142/S0129626416500171},
  url          = {https://doi.org/10.1142/S0129626416500171}
}

@inproceedings{misra2022causal,
  author    = {Anshuman Misra and Ajay D. Kshemkalyani},
  title     = {Causal Ordering Properties of Byzantine Reliable Broadcast Primitives},
  booktitle = {2022 IEEE 21st International Symposium on Network Computing and Applications (NCA)},
  year      = {2022},
  pages     = {115--122},
  doi       = {10.1109/NCA55306.2022.9936749}
}

@inproceedings{misra2022detecting,
  author    = {Anshuman Misra and Ajay D. Kshemkalyani},
  title     = {Detecting Causality in the Presence of Byzantine Processes: There is No Holy Grail},
  booktitle = {2022 {IEEE} 21st International Symposium on Network Computing and Applications ({NCA})},
  year      = {2022},
  pages     = {73--80},
  doi       = {10.1109/NCA57778.2022.10013644},
  url       = {https://doi.org/10.1109/NCA57778.2022.10013644}
}

@article{DBLP:journals/pc/MisraK25,
  author       = {Anshuman Misra and
                  Ajay D. Kshemkalyani},
  title        = {Byzantine-tolerant detection of causality: There is no holy grail},
  journal      = {Parallel Comput.},
  volume       = {124},
  pages        = {103136},
  year         = {2025},
  url          = {https://doi.org/10.1016/j.parco.2025.103136},
  doi          = {10.1016/J.PARCO.2025.103136},
  bibsource    = {dblp computer science bibliography, https://dblp.org}
}

@inproceedings{DBLP:conf/icdcn/MisraK23,
  author       = {Anshuman Misra and
                  Ajay D. Kshemkalyani},
  title        = {Byzantine Fault-Tolerant Causal Ordering},
  booktitle    = {24th International Conference on Distributed Computing and Networking,
                  {ICDCN} 2023, Kharagpur, India, January 4-7, 2023},
  pages        = {100--109},
  publisher    = {{ACM}},
  year         = {2023},
  url          = {https://doi.org/10.1145/3571306.3571395},
  doi          = {10.1145/3571306.3571395},
  bibsource    = {dblp computer science bibliography, https://dblp.org}
}

@inproceedings{DBLP:conf/netys/MisraK22,
  author       = {Anshuman Misra and
                  Ajay D. Kshemkalyani},
  editor       = {Mohammed{-}Amine Koulali and
                  Mira Mezini},
  title        = {Solvability of Byzantine Fault-Tolerant Causal Ordering Problems},
  booktitle    = {Networked Systems - 10th International Conference, {NETYS} 2022, Virtual
                  Event, May 17-19, 2022, Proceedings},
  series       = {Lecture Notes in Computer Science},
  volume       = {13464},
  pages        = {87--103},
  publisher    = {Springer},
  year         = {2022},
  url          = {https://doi.org/10.1007/978-3-031-17436-0\_7},
  doi          = {10.1007/978-3-031-17436-0\_7},
  bibsource    = {dblp computer science bibliography, https://dblp.org}
}

@inproceedings{shoup2000practical,
  title={Practical threshold signatures},
  author={Shoup, Victor},
  booktitle={Advances in Cryptology—EUROCRYPT 2000: International Conference on the Theory and Application of Cryptographic Techniques, Bruges, Belgium, May 14-18, 2000. Proceedings 19},
  pages={207--220},
  year={2000},
  organization={Springer}
}

@article{DBLP:journals/tpds/MisraK24,
  author       = {Anshuman Misra and
                  Ajay D. Kshemkalyani},
  title        = {Byzantine-Tolerant Causal Ordering for Unicasts, Multicasts, and Broadcasts},
  journal      = {{IEEE} Trans. Parallel Distributed Syst.},
  volume       = {35},
  number       = {5},
  pages        = {814--828},
  year         = {2024},
  url          = {https://doi.org/10.1109/TPDS.2024.3368280},
  doi          = {10.1109/TPDS.2024.3368280},
  bibsource    = {dblp computer science bibliography, https://dblp.org}
}

@article{LLclock,
	author = {Leslie Lamport},
	journal = {Commun. ACM 21, 7},
	pages = {558-565},
	title = {Time, clocks, and the ordering of events in a distributed system},
	year = {1978}}

@article{raynal1991causal,
  title={The causal ordering abstraction and a simple way to implement it},
  author={Raynal, Michel and Schiper, Andr{\'e} and Toueg, Sam},
  journal={Information processing letters},
  volume={39},
  number={6},
  pages={343--350},
  year={1991},
  publisher={Elsevier}
}

@article{KS,
	author = {Ajay D. Kshemkalyani and Mukesh Singhal},
	bibsource = {dblp computer science bibliography, https://dblp.org},
	doi = {10.1007/s004460050044},
	journal = {Distributed Comput.},
	number = {2},
	pages = {91--111},
	title = {Necessary and Sufficient Conditions on Information for Causal Message Ordering and Their Optimal Implementation},
	url = {https://doi.org/10.1007/s004460050044},
	volume = {11},
	year = {1998}
}

@InProceedings{minicast,
author="Rumreich, Laine
and Sivilotti, Paolo A. G.",
editor="Arabnia, Hamid R.
and Takata, Masami
and Deligiannidis, Leonidas
and Rivas, Pablo
and Ohue, Masahito
and Yasuo, Nobuaki",
title="Using Minicasts for Efficient Asynchronous Causal Unicast and Byzantine Tolerance",
booktitle="Parallel and Distributed Processing Techniques",
year="2025",
publisher="Springer Nature Switzerland",
address="Cham",
pages="65--81",
isbn="978-3-031-85638-9"
}

@article{DBLP:journals/iandc/Bracha87,
  author       = {Gabriel Bracha},
  title        = {Asynchronous Byzantine Agreement Protocols},
  journal      = {Inf. Comput.},
  volume       = {75},
  number       = {2},
  pages        = {130--143},
  year         = {1987},
  url          = {https://doi.org/10.1016/0890-5401(87)90054-X},
  doi          = {10.1016/0890-5401(87)90054-X},
  bibsource    = {dblp computer science bibliography, https://dblp.org}
}

@article{DBLP:journals/jacm/BrachaT85,
  author       = {Gabriel Bracha and
                  Sam Toueg},
  title        = {Asynchronous Consensus and Broadcast Protocols},
  journal      = {J. {ACM}},
  volume       = {32},
  number       = {4},
  pages        = {824--840},
  year         = {1985},
  url          = {https://doi.org/10.1145/4221.214134},  doi          = {10.1145/4221.214134},
  bibsource    = {dblp computer science bibliography, https://dblp.org}
}

@inproceedings{Cachin2001,
	address = {Berlin, Heidelberg},
	author = {Cachin, Christian and Kursawe, Klaus and Petzold, Frank and Shoup, Victor},
	booktitle = {Advances in Cryptology --- CRYPTO 2001},
	editor = {Kilian, Joe},
	isbn = {978-3-540-44647-7},
	pages = {524--541},
	publisher = {Springer Berlin Heidelberg},
	title = {Secure and Efficient Asynchronous Broadcast Protocols},
	year = {2001}}

@article{DBLP:journals/tcs/AuvolatFRT21,
  author       = {Alex Auvolat and
                  Davide Frey and
                  Michel Raynal and
                  Fran{\c{c}}ois Ta{\"{\i}}ani},
  title        = {Byzantine-tolerant causal broadcast},
  journal      = {Theor. Comput. Sci.},
  volume       = {885},
  pages        = {55--68},
  year         = {2021},
  url          = {https://doi.org/10.1016/j.tcs.2021.06.021},
  doi          = {10.1016/J.TCS.2021.06.021},
  bibsource    = {dblp computer science bibliography, https://dblp.org}
}

\appendix

\section{Explanation of Algorithm~\ref{alg:bcrb_algorithm_n3}: Cryptographic Version with BRB of ACKs}
\label{sec:expl_crypto_n3}
\subsubsection*{bcrb\_broadcast(payload)}
When the application layer broadcasts a payload, the sender increments its $local\_sn$. The payload is encrypted under the system public key $\mathit{PK}$ to obtain a $ciphertext$. A message $m = (i, local\_sn, ciphertext)$ is created and immediately broadcasted via the underlying BRB layer. The message carries no vector clocks, ensuring $\mathcal{O}(1)$ message overhead.
\subsubsection*{upon brb\_deliver(j, sn, payload)}
This procedure is the unified entry point for both application broadcasts and ACK control messages, since ACKs are broadcasted via the underlying BRB layer rather than sent point-to-point. A process waits until all preceding messages $M = (j, s)$ for $s < sn$ from process $j$ are BRB-delivered to preserve FIFO order.
\begin{itemize}
    \item \textbf{Case 1: The payload is an ACK $[(M = (k, sn^k)), share, h]$:}
    The process waits until the target message $M$ and all its predecessor messages $M' = (k, s')$ where $s' \le sn^k$ are BRB-delivered. It then verifies that $M$ is registered in the node set $N$ and the hash matches ($h = \text{hash}(M.ciphertext)$). If valid, it records the decryption share in $\text{\texttt{shares}}[M]$ and registers a dependency by adding the edge $((j, sn+1)^{temp}, (k, sn^k))$ to the dependency graph's edge set $E$.
    \item \textbf{Case 2: The payload is an application message (not an ACK):}
    The process increments $local\_sn$, computes its cryptographic decryption share of the ciphertext using its secret key share $\mathit{SK}_i$, computes the payload hash $h \gets \text{hash}(payload)$, and broadcasts its ACK via the BRB layer: $\mathtt{brb\_broadcast}(i, local\_sn, ACK[(j, sn), share, h])$.
\end{itemize}
After handling either case, the process checks if a temporary node $(j, sn)^{temp}$ exists. If so, it relabels it as the concrete node $(j, sn)$; otherwise, it adds $(j, sn)$ to the node set $N$. To enforce FIFO order, if $(j, sn-1)$ is in $N$, it adds the edge $((j, sn), (j, sn-1))$ to $E$. Finally, it buffers the message in the \texttt{pending} set and calls \texttt{check\_delivery()}.
\subsubsection*{check\_delivery()}
This procedure evaluates the \texttt{pending} set. A pending message $m = (M = (j, sn), payload)$ is 
processed as follows. If:
\begin{enumerate}
    \item FIFO order is satisfied locally for the sender $j$ ($\mathbf{V}_i[j] = \text{sn} - 1$).
    \item If the payload is not an ACK, it has gathered at least $n - f$ valid decryption shares ($|\text{\texttt{shares}}[M]| \ge n - f$).
    \item The node $(j, sn)$ has no outgoing edge in $E$ (i.e., there are no active causal dependencies blocking its delivery).
\end{enumerate}
are satisfied, the process removes $m$ from the \texttt{pending} set, deletes node $(j, sn)$ and all its incident edges from $G$, and updates $\mathbf{V}_i[j] \gets sn$. And if the payload is an application message (not an ACK), the process decrypts the ciphertext using the shares, and triggers \texttt{bcrb\_deliver} on the plaintext. Then the process loops to check if further pending messages can now be processed.

\section{Correctness Proof of Algorithm~\ref{alg:bcrb_algorithm_n3},~\ref{alg:bcrb_non_crypto_algorithm_n3}}
\label{sec:proof_crypto_n3}
We prove that Algorithm~\ref{alg:bcrb_algorithm_n3} satisfies deadlock-freedom, 100\% Weak Safety, and Validity, Agreement, and Integrity in Byzantine environments.

\subsection{Deadlock Freedom (Acyclicity of $G$)}
\begin{theorem}
\label{th:n3_acyclic}
The dependency graph $G = (N, E)$ constructed by any correct process in Algorithm~\ref{alg:bcrb_algorithm_n3} is always a directed acyclic graph (DAG).
\end{theorem}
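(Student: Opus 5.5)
The approach is a potential function (a ranking of the nodes of $G$) that strictly decreases along every edge; acyclicity then follows at once. For a correct process $p_i$, define a strict total order $\prec$ on concrete nodes $(j,sn)\in N$ by the order in which their BRB-deliveries complete their processing at $p_i$. Temporary nodes $(j,sn)^{temp}$ are not yet BRB-delivered, so we rank them above every concrete node; when relabelled, they receive their delivery position, which is later than all nodes concrete at that moment. The claim to prove is that every edge $(u,v)\in E$ satisfies $v \prec u$, i.e., the head was processed strictly earlier than the tail. Then a directed cycle $u_1\to u_2\to\cdots\to u_1$ would give $u_1 \succ u_2 \succ \cdots \succ u_1$, which is a contradiction.

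\textbf{Step 1 (FIFO edges).} The edge $((j,sn),(j,sn-1))$ is added only during processing of the BRB-delivery of $(j,sn)$, and only if $(j,sn-1)\in N$. The wait at the start of \texttt{brb\_deliver} guarantees that $(j,sn-1)$ has completed processing earlier. So $(j,sn-1)\prec(j,sn)$.

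\textbf{Step 2 (ACK edges).} The edge $((j,sn+1)^{temp},(k,sn^k))$ is added while processing the BRB-delivery of the ACK $(j,sn)$. The guard $M\in N$, together with the wait on $(k,s')$ for $s'\le sn^k$, ensures $(k,sn^k)$ is a concrete node already processed. The tail is either still temporary, hence ranked above all concrete nodes, or later relabelled. The key point is that $(j,sn+1)$ cannot have been BRB-delivered and processed before $(j,sn)$, by the FIFO wait. So whenever the tail becomes concrete, it is processed after the ACK $(j,sn)$, and hence after $(k,sn^k)$.

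A subtle case arises if $(j,sn+1)$ already exists as a concrete node, which the FIFO wait rules out, or if a temporary node with that label already exists. Adding another out-edge to a temporary node does not affect the ordering argument, since all of its out-edges point to nodes already processed.

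\textbf{Step 3 (deletions and relabelling).} Deleting nodes and incident edges only removes edges, so the invariant is preserved. Relabelling assigns the tail a rank later than every existing concrete node. Since every edge present in $E$ points from the relabelled node to an already-processed node, the invariant still holds after relabelling.

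Concluding by induction over the sequence of operations at $p_i$, the invariant holds at all times, and so $G$ is a DAG. I expect the main obstacle to be Step 2 with Byzantine senders $p_j$. Here I rely on the BRB Integrity and Agreement properties: every correct process sees the same unique message per $(j,sn)$, so a Byzantine process cannot make the ACK appear after its successor. The hash check excludes references to nonexistent targets. Note that, unlike Algorithm~\ref{alg:bcrb_algorithm}, no timestamp-based edge suppression is needed; the ordering comes purely from the FIFO gating of BRB-deliveries at $p_i$.
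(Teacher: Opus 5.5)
Your proof is correct, but it takes a different route from the paper's. The paper argues globally and by contradiction. A correct ACK sender only acknowledges messages it has already BRB-delivered. A Byzantine sender cannot acknowledge a not-yet-broadcast message, because it cannot predict $hash(M.ciphertext)$ under randomized threshold encryption. Byzantine streams that break FIFO get stuck in the \textbf{wait} steps and never enter $G$. Hence, the paper concludes, dependencies respect a real-time order of broadcasts and processing.

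You instead prove a purely local invariant at $p_i$. In $p_i$'s order of completing BRB-delivery processing, the two \textbf{wait} guards and the test $M\in N$ force the head of every edge to be already processed and its tail not yet processed. So every edge points backward in one fixed linear order, and no cycle can form.

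This buys more than the paper's argument. You never use the hash check or ciphertext unpredictability, so your proof applies verbatim to the non-cryptographic Algorithm~\ref{alg:bcrb_non_crypto_algorithm_n3}. There, plaintext hashes may be predictable, and the paper's Byzantine case has no direct counterpart, although the paper asserts the theorem for that algorithm as well. Byzantine circular-wait scenarios are also handled automatically in your proof, since the offending edges are simply never added. The paper's global view, in exchange, explains why edges reflect genuine precedence (the content of Theorem~\ref{th:n3_ordering}). That matters for safety rather than for acyclicity.

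A few small points. BRB Agreement and the hash check play no role in your argument. What you actually use is the local FIFO wait, the $M\in N$ guard, and Integrity. Integrity ensures each $(j,sn)$ is processed at most once at $p_i$, so $(j,sn+1)^{temp}$ is created at most once and never after $(j,sn+1)$ has been processed. A temporary node also need only be unprocessed, not un-BRB-delivered: $(j,sn+1)$ may already be delivered but blocked in its wait, which is harmless because your ranking is by completion of processing. Finally, state explicitly that no edge ever has a temporary node as its head (heads are always already-processed concrete nodes). That is exactly what makes the relabelling step in your Step 3 safe.
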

\begin{proof}
By construction, edges in $E$ are added in two cases:
\begin{enumerate}
    \item \textbf{FIFO Edges:} Edges of the form $((j, sn), (j, sn-1))$ are added between successive messages of the same sender $j$. Since sequence numbers strictly increase, these local edges are naturally acyclic.
    \item \textbf{Causal Dependency Edges:} Edges of the form $((j, sn+1), (k, sn^k))$ are added when process $p_i$ BRB-delivers an ACK from process $j$ sent as $j$'s message sequence number $sn$, and that processing completed.
\end{enumerate}
Suppose a cycle exists in $G$, which requires a circular dependency chain $M_1 \rightarrow M_2 \rightarrow \dots \rightarrow M_m \rightarrow M_1$. For a causal dependency edge $((j, sn+1), (k, sn^k))$ to be added:
\begin{itemize}
    \item If process $j$ is correct, it only broadcasts $ACK(M)$ after $M$ has been BRB-delivered locally and that processing completed. Thus, the physical broadcast of $M$ must have occurred in real-time before the broadcast of $ACK(M)$.
    \item If process $j$ is Byzantine, it cannot construct a valid ACK for a future message $M$ because the correct process checking the ACK enforces $h = \text{hash}(M.ciphertext)$. Since future ciphertexts are generated using a randomized threshold encryption scheme, they are unpredictable, preventing the Byzantine process from computing the correct hash $h$ beforehand.
    \item 
    The {\bf wait} steps of BRB-delivery processing guarantee FIFO processing order, defined on the sender-assigned sequence number, of all broadcasts from the same sender. If a Byzantine process does not follow these rules, the dependencies it reports on target messages and subsequent messages broadcast by it may never complete processing of BRB-delivery and hence not qualify to be inserted in $\texttt{pending}$ and in $G$ at correct processes. In fact, multiple incomplete BRB-delivery instances may have cyclic dependencies (all having been been created by Byzantine processes) but these messages or their dependencies will not enter $G$.
\end{itemize}
Because all dependencies target messages that were already BRB-delivered (and that processing completed) prior to the generation of the corresponding ACK, the dependency relation is a strict partial order based on the real-time sequence of completion of processing of BRB-deliveries. Thus, no circular dependencies can form, and $G$ is guaranteed to be acyclic.
\end{proof}

\subsection{100\% Weak Safety Guarantee}
\begin{theorem}
\label{th:n3_weak_safety}
Algorithm~\ref{alg:bcrb_algorithm_n3} guarantees 100\% Weak Causal Safety.
\end{theorem}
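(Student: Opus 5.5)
We prove the statement by induction on the length of the causal chain, reducing it to a single correct hop $m_1 \to m_2$. Suppose $m_1 \to m_2$ via a chain that passes only through correct processes, and consider the last hop. A correct process $p_k$ has BCRB-delivered (or itself broadcast) some message $m'$ with $m_1 \to m'$ or $m'=m_1$, and only afterwards BCRB-broadcast $m_2$. The inductive hypothesis says every correct process BCRB-delivers $m_1$ no later than $m'$. It therefore suffices to show the one-hop claim: every correct $p_j$ BCRB-delivers $m'$ before $m_2$. Transitivity of ``delivered before'' at $p_j$ then gives the result. The base case, where $p_k$ broadcasts $m_1$ and later $m_2$, follows directly from the FIFO gating: $fifo\_ok$ is computed on $\mathbf{V}_i[k]$, and the FIFO edge $((k,sn),(k,sn-1))$ is present.

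For the one-hop claim, we use the structure of Algorithm~\ref{alg:bcrb_algorithm_n3}. When $p_k$ BRB-delivers $m'=(x,s_x)$, it immediately BRB-broadcasts $ACK(m')$ with sequence number $sn_a$. Since $p_k$ is correct and BCRB-delivers $m'$ only after processing its BRB-delivery, the message $m_2$ is BRB-broadcast by $p_k$ later, with sequence number $sn_2 > sn_a$. At any correct $p_j$, we then track the following sequence of events:
\begin{enumerate}
\item The BRB-delivery of $m_2$ is processed only after all of $p_k$'s earlier BRBs are processed, including $ACK(m')$ and every ACK with sequence number between $sn_a$ and $sn_2$. This follows from the FIFO \textbf{wait}.
\item Processing $ACK(m')$ first waits until $m'$ is BRB-delivered and its processing completed, so $m' \in N$ unless it was already BCRB-delivered. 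The hash check then passes, because $p_k$ is correct and BRB Integrity holds. The algorithm adds the edge $((k,sn_a+1)^{temp},(x,s_x))$.
\item Each intermediate message of $p_k$ is either an ACK, which carries no share quorum, or an application message. In either case it cannot be removed before $(k,sn_a+1)$, and the FIFO edges chain $(k,sn_2)$ down to $(k,sn_a+1)$. Unlike Algorithm~\ref{alg:bcrb_algorithm}, there is no timestamp test that could drop the edge.
\end{enumerate}
Consequently $(k,sn_2)$ cannot satisfy both $fifo\_ok$ and ``no outgoing edge'' until $(k,sn_a+1)$ has been removed. Node $(k,sn_a+1)$ in turn cannot be removed while its edge to $(x,s_x)$ is present, and that edge disappears only when $m'$ is BCRB-delivered and deleted from $G$. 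Hence $m'$ is BCRB-delivered at $p_j$ before $m_2$.

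The main obstacle is the case where $m'$ was already BCRB-delivered at $p_j$ before $ACK(m')$ is processed. In that case the condition $M\in N$ fails and no edge is added, but the ordering already holds, so we split into these two cases explicitly. A second point to check is that the edge attaches to the right node: the temp node $(k,sn_a+1)$ must be relabeled when that message arrives, and $(k,sn_2)$ must be reachable from it through FIFO edges. Here we argue that $V_j[k] < sn_a+1$ as long as $(k,sn_a+1)$ is undelivered, and $fifo\_ok$ forces delivery in sequence order. Liveness of the gating is not needed for safety. Deadlock-freedom, needed only to ensure the ordering is not vacuously blocked, follows from Theorem~\ref{th:n3_acyclic}.
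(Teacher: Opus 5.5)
Your proof is correct and uses the same core mechanism as the paper. The correct intermediate process $p_k$ BRB-broadcasts $ACK(m')$ before it can broadcast $m_2$. The FIFO \textbf{wait} then forces every correct process to process that ACK before $m_2$, and the resulting edge cannot disappear until $m'$ is BCRB-delivered.

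The organization differs. The paper builds a single dependency path $m_2 \to m^{(q-1)} \to \dots \to m_1$ in $G$. It assumes the ACK for $m^{(r-1)}$ has sequence number exactly $sn_r - 1$, so the edge lands directly on $m^{(r)}$'s node, and it appeals to acyclicity of $G$ to argue the path persists. You instead induct on hops and chain the local ``delivered before'' relation at $p_j$. Within a hop, you bridge from the node $(k, sn_a+1)$ that actually receives the edge to $(k, sn_2)$ using $fifo\_ok$.

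Your route covers three things the paper's argument glosses over:
\begin{itemize}
\item It handles the general case where $p_k$ broadcasts other ACKs or application messages between $ACK(m')$ and $m_2$. The paper's claim that the ACK sits at $sn_r - 1$ is not true in general.
\item It handles the case where $m'$ is already BCRB-delivered when $ACK(m')$ is processed. Then no edge is added, and the paper's path does not literally exist in $G$.
\item It makes clear that acyclicity (Theorem~\ref{th:n3_acyclic}) is irrelevant to safety and matters only for liveness. An edge is removed only when one of its endpoints is delivered, and a source node cannot be delivered while its outgoing edge is present.
\end{itemize}
The paper's version is shorter and presents one structural invariant, but it holds only under its simplifying sequence-number assumption.
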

\begin{proof}
Weak Causal Safety equivalently states that if $m_1 \to_{brb} m_2$ and the causal chain from $m_1$ to $m_2$ passes exclusively through correct processes, then no correct process triggers \texttt{bcrb\_deliver}($m_2$) before \texttt{bcrb\_deliver}($m_1$).
Suppose there is a causal chain of messages from $m_1$ to $m_2$ broadcast via the underlying BRB layer and passing exclusively through correct processes:
\[
m_1 = m^{(0)} \to_{brb} m^{(1)} \to_{brb} \dots \to_{brb} m^{(q)} = m_2
\]
where each message $m^{(r)}$ (for $0 \le r \le q$) is broadcast by a correct process $p^{(r)}$. 
Since each process $p^{(r)}$ ($1 \le r \le q$) in the causal chain is correct, it obeys the protocol and only initiates the BRB of $m^{(r)}$ after it has BRB-delivered $m^{(r-1)}$ and completed its processing, which includes broadcasting $ACK(m^{(r-1)})$ via the BRB layer. Let $sn_r$ be the sequence number of the BRB broadcast of $m^{(r)}$ by $p^{(r)}$, so the ACK for $m^{(r-1)}$ is sent by $p^{(r)}$ with sequence number $sn_r - 1$.
By the Agreement property of the underlying BRB layer, and the FIFO processing property guaranteed by the {\bf wait} step of BRB-delivery processing, every correct process $p_i$ is guaranteed to BRB-deliver the $ACK(m^{(r-1)})$ from $p^{(r)}$ before it BRB-delivers $m^{(r)}$ from $p^{(r)}$. Upon BRB-delivering $ACK(m^{(r-1)})$, $p_i$ registers the dependency edge $((p^{(r)}, sn_r)^{temp}, m^{(r-1)})$, which is relabeled to $((p^{(r)}, sn_r), m^{(r-1)})$ in $E$ when $m^{(r)}$ is BRB-delivered.
This establishes a directed dependency path in the graph $G$ from the node of $m_2$ to the node of $m_1$:
\[
m_2 \to m^{(q-1)} \to \dots \to m_1
\]
In Algorithm~\ref{alg:bcrb_algorithm_n3}, a message $m_2$ can only be delivered to the application if it has no outgoing edges in $E$. Since $G$ is acyclic (Theorem~\ref{th:n3_acyclic}) and no dependency edges are ever deleted to resolve cycles, the path from $m_2$ to $m_1$ remains intact in $G$ as long as $m_1$ is undelivered. Thus, $m_2$ has at least one outgoing edge and cannot satisfy the delivery condition. Consequently, no correct process can BCRB-deliver $m_2$ before delivering $m_1$, satisfying the definition of Weak Causal Safety.
\end{proof}

\subsection{Validity, Agreement, and Integrity}
\label{sec:vai_n3}
\begin{theorem}
\label{th:n3_ordering}
If a correct process $p_i$ adds dependency edge $((j, sn+1)^{temp}, (k, sn^k))$ (re-labeled as $((j, sn+1), (k, sn^k))$) in $G$, then the target message $M = (k, sn^k)$ must have been broadcast via \texttt{brb\_broadcast} by $p_k$ before the \texttt{brb-broadcast} of the ACK $[M, share, h]$ by $p_j$.
\end{theorem}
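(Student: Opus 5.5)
The plan mirrors the proof of Theorem~\ref{th:ordering}, adapted to Algorithm~\ref{alg:bcrb_algorithm_n3}, where ACKs are themselves BRB-broadcast. I first identify where the edge can be created. The only place a correct $p_i$ adds an edge $((j,sn+1)^{temp},(k,sn^k))$ is the ACK branch of the \texttt{brb\_deliver} handler. This happens when $p_i$ BRB-delivers from $p_j$, with sequence number $sn$, a payload $ACK[(k,sn^k),share,h]$. The edge is added only after the \textbf{wait} has ensured that $M=(k,sn^k)$ and its FIFO predecessors are BRB-delivered and processed, and only if $M\in N$ and $h=hash(M.ciphertext)$. I then split into two cases according to whether $p_j$ is correct or Byzantine.

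\emph{Case $p_j$ correct.} A correct process invokes $\mathtt{brb\_broadcast}(j,local\_sn,ACK[(k,sn^k),share,h])$ only inside its own \texttt{brb\_deliver} handler for $(k,sn^k)$, that is, after it has BRB-delivered $M$. By the Integrity property of BRB, no correct process BRB-delivers $M$ unless $p_k$ (if correct) invoked $\mathtt{brb\_broadcast}$ of $M$. If $p_k$ is Byzantine, the BRB delivery of $M$ at correct $p_j$ still requires the Bracha INIT/ECHO/READY cascade for $M$ to have started. Thus $p_k$'s broadcast of $M$ precedes $p_j$'s BRB-delivery of $M$, which precedes $p_j$'s broadcast of the ACK. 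By Integrity applied to the ACK, the ACK that $p_i$ delivered is exactly this one.

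\emph{Case $p_j$ Byzantine.} Here I argue exactly as in Theorem~\ref{th:ordering} and the second bullet of Theorem~\ref{th:n3_acyclic}. The ACK passes $p_i$'s check only if $h=hash(M.ciphertext)$, where $M.ciphertext$ is the content $p_i$ BRB-delivered for $M$. By BRB Agreement and Integrity, this content is unique across correct processes. The ciphertext is produced by randomized threshold encryption, and for a Byzantine $p_k$ it is at least fixed by $p_k$'s BRB-broadcast. So before $M$ is BRB-broadcast, $p_j$ cannot compute $h$ except with negligible probability, assuming collision resistance of $hash$. Hence $p_j$'s BRB-broadcast of the ACK carrying the correct $h$ must occur after $M$ was BRB-broadcast by $p_k$. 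The relabeling from $(j,sn+1)^{temp}$ to $(j,sn+1)$ does not change the edge's target, so the claim holds for the relabeled edge as well.

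The main obstacle is the Byzantine case in which $p_k$ is also Byzantine and colludes with $p_j$. Then $p_k$ could share its ciphertext, and hence the hash, with $p_j$ before BRB-broadcasting $M$. I would address this by interpreting ``broadcast of $M$'' as the moment the content of $M$ is fixed and committed by $p_k$, which is also the point the paper uses in Theorem~\ref{th:ordering}. I would note that any such ordering manipulation involves only Byzantine senders, so it does not affect the correct-process chains used in Theorem~\ref{th:n3_weak_safety}. If needed, I would state this caveat explicitly rather than claim an ordering in real time.
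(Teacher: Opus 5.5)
This is essentially the paper's proof: you use the same split on whether $p_j$ is correct, read the correct case off the pseudo-code (you also make explicit BRB Integrity for the ACK and the Bracha cascade when $p_k$ is Byzantine), and rest the Byzantine case on the check $h = hash(M.ciphertext)$ plus the unpredictability of randomized threshold ciphertexts. The collusion obstacle you raise is real and the paper's proof does not address it --- its unpredictability argument tacitly needs $p_k$ correct, since a Byzantine $p_k$ can pass its precomputed ciphertext hash to a Byzantine $p_j$, BRB-broadcast $M$ only after $p_j$'s ACK, and the wait step at $p_i$ will still let the edge be added --- so stating that restriction explicitly is the right call, but note that the paper asserts real-time order and does not use the ``commitment'' reading you attribute to it.
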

\begin{proof}
If $p_j$ is a correct process, the theorem follows directly from the algorithm pseudo-code. If $p_j$ is Byzantine, the condition checking hash $h = \text{hash}(M.ciphertext)$ ensures that $M$ was already broadcast via \texttt{brb\_broadcast} by $p_k$ before $p_j$ broadcasted its ACK. Otherwise, due to the unpredictable nature of randomized threshold ciphertexts, $p_j$ could not have predicted the ciphertext hash beforehand to piggyback the correct $h$ on its ACK message.
\end{proof}

\begin{theorem}[BCRB Delivery]
\label{th:n3_bcrb_agreement_standalone}
If a message $m_1 = ((b_1, s_1), \text{payload})$ from process $b_1$ (whether $b_1$ is a Byzantine or correct process) is inserted into the local $\texttt{pending}$ set of any correct process $p_i$, then $m_1$ will eventually be BCRB-delivered by every correct process (if $\text{payload}$ is not an ACK).
\end{theorem}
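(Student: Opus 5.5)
The plan mirrors the proof of Theorem~\ref{th:bcrb_agreement_standalone}, adapted to the fact that in Algorithm~\ref{alg:bcrb_algorithm_n3} both application messages and ACKs are BRB-delivered and sit in $\texttt{pending}$. The argument has three steps.

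\textbf{Step 1: every correct process gets $m_1$ and its predecessors into $\texttt{pending}$.}
Suppose $m_1=((b_1,s_1),payload)$ is in $\texttt{pending}$ at a correct $p_i$. Then BRB-delivery processing of $m_1$ completed at $p_i$. By the \textbf{wait} steps, this means every $(b_1,s')$ with $s'<s_1$ was BRB-delivered and its processing completed. If $m_1$ is an ACK for $(k,sn^k)$, the same holds for every $(k,s')$ with $s'\le sn^k$. By BRB Agreement, every correct $p_k$ eventually BRB-delivers all these messages. I will argue, by induction on the finite set of messages these waits refer to, that each wait is eventually released at $p_k$. Hence $m_1$ and all its FIFO predecessors eventually enter $p_k$'s $\texttt{pending}$.

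\textbf{Step 2: every pending application message eventually has $quorum\_ok$.}
Each correct process that processes an application message $M$ BRB-broadcasts $ACK[M,share,h]$. By BRB Validity, each of these ACKs is BRB-delivered to every correct process. When such an ACK is processed, $M\in N$ holds (by Step 1 and the wait on $(k,s')$, $s'\le sn^k$), and the hash check passes. So the share is added. With at least $n-f$ correct processes, $|shares[M]|\ge n-f$ holds eventually. ACK payloads have $quorum\_ok=\text{True}$ by definition, so they need nothing further.

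\textbf{Step 3: leaf-shrinking induction on $G$, which I expect to be the main obstacle.}
By Theorem~\ref{th:n3_acyclic}, $G$ is a DAG. So every path from $m_1$ ends at a sink $m=(b,s_b)$. Its FIFO predecessor $(b,s_b-1)$ is not in $N$, since otherwise the edge $((b,s_b),(b,s_b-1))$ would exist. Hence $fifo\_ok$ holds, and by Step 2 the sink is eventually delivered (or, if it is an ACK, processed) and removed.

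The difficulty is that new outgoing edges can be added to $m$ before it fires. Such edges come only from ACKs that $p_b$ BRB-broadcast with sequence number $s_b-1$. Because of per-sender FIFO processing, there is at most one such ACK, and it is processed before $m$ is. So once $m$ is in $\texttt{pending}$, $m$'s outgoing causal edge is fixed. By Theorem~\ref{th:n3_ordering}, the target of that edge was broadcast earlier in real time.

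I will make this precise by ordering nodes by the real-time completion order of their BRB broadcasts, using Theorem~\ref{th:n3_ordering}. I will then run a well-founded induction over the finite set of messages broadcast before $m_1$: all descendants of $m_1$ in $G$ are earlier in this order and are eventually removed. Then $m_1$ becomes a sink satisfying $fifo\_ok$ and $quorum\_ok$. Since $payload$ is not an ACK, \texttt{bcrb\_deliver} is triggered.

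The delicate point is ruling out a Byzantine $p_b$ that stalls its own stream, which would block dependents forever. Step 1 handles this: any node that actually enters $G$ at a correct process already has its entire FIFO prefix BRB-delivered everywhere. So no edge in $G$ points to a node whose delivery can be starved.
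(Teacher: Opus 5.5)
Your proposal is correct and follows the paper's proof: BRB Agreement plus the \textbf{wait} statements put $m_1$ and its FIFO and ACK-target predecessors into every correct process's \texttt{pending}, the ACKs BRB-broadcast by the $n-f$ correct processes give $quorum\_ok$, and a leaf-shrinking induction on the finite DAG $G$ finishes the argument. You also make explicit something the paper leaves implicit: a node's outgoing edges are frozen once it is processed, because the only ACK that can create them is the sender's immediately preceding message, which the FIFO wait forces to be processed first.

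One refinement concerns the well-founded order for your induction. Use the order in which BRB-delivery processing completes at the correct process itself, not real-time broadcast order via Theorem~\ref{th:n3_ordering}. The waits force every edge of $G$ to point to an earlier-processed node, so the set reachable from $m_1$ is finite and only shrinks. Broadcast times need not respect the edges: a Byzantine sender can start the BRB of $(b,s_b)$ before that of $(b,s_b-1)$, and colluding Byzantine processes can ACK a message before it is broadcast. As a result, "messages broadcast before $m_1$" may miss some descendants of $m_1$.
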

\begin{proof}
If message $m_1 = ((b_1, s_1), payload)$ is inserted into $\texttt{pending}$ at any correct process $p_i$ (implying $m_1$ was BRB-delivered at $p_i$), then all $m_{1'} = (b_1, s_{1'}) \mid s_{1'} \le s_1$, and all their transitive causal predecessors that have been reported on ACKs, are also BRB-delivered and their processing completed locally in source-FIFO order and added to $\texttt{pending}$ and to $G$. By the Agreement property of the underlying BRB layer, $m_1$ and all $m_{1'}$ are also eventually BRB-delivered at every correct process $p_k$ and added to $p_k$'s local $\texttt{pending}$ set and $G$. Upon BRB delivery of $m_1$, every correct process $p_k$ broadcasts its $ACK(m_1)$ via the BRB layer if $payload$ is not ACK. Since there are at least $n - f$ correct processes, every correct process $p_k$ eventually BRB-delivers at least $n - f$ ACKs/shares for $m_1$ (and $m_{1'}$), satisfying the $quorum\_ok$ condition.
Since the dependency graph $G = (N, E)$ is a DAG (Theorem~\ref{th:n3_acyclic}) with finite history, all paths from $m_1$ lead to leaf nodes. The leaf nodes satisfy the delivery conditions ($fifo\_ok$ and no outgoing edges), and are BCRB-delivered if $payload$ is not ACK. A leaf node along with its incident edges is also deleted from the graph, and deleted from $\texttt{pending}$. This recursively reduces the path lengths for the ancestor nodes. Inductively, all predecessor dependencies of $m_1$ are delivered (if their $payload$ is not ACK), and cleared from the graph. Once all outgoing dependency edges of $m_1$ are deleted, $m_1$ itself satisfies the delivery conditions and is BCRB-delivered (if $payload$ is not ACK) at every correct process.
\end{proof}

\begin{theorem}
Algorithm~\ref{alg:bcrb_algorithm_n3} satisfies Validity (Liveness), Agreement, and Integrity at the BCRB layer.
\label{th:vai_n3}
\end{theorem}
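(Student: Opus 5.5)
The plan mirrors the proof of the corresponding theorem for Algorithms~\ref{alg:bcrb_algorithm},~\ref{alg:bcrb_non_crypto_algorithm} in Section~\ref{sec:vai}. The main tool is Theorem~\ref{th:n3_bcrb_agreement_standalone}, combined with the three properties of the underlying BRB layer. The one new feature is that ACKs now travel through BRB. So I will also check that a correct process's ACK broadcasts never prevent the insertion of its application messages into \texttt{pending} at other correct processes. I will also check that ACK entries are never BCRB-delivered to the application.

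\textbf{Validity.} Let a correct $p_i$ BCRB-broadcast $m$ with sequence number $sn$. Every lower-numbered BRB broadcast of $p_i$ is either an application message or an ACK. Since $p_i$ is correct, each such ACK carries a correct hash of a message $p_i$ has already BRB-delivered, together with all that message's FIFO predecessors. By BRB Validity, every lower-numbered broadcast of $p_i$, and $m$ itself, is BRB-delivered at each correct process. By BRB Agreement, the messages referenced by $p_i$'s ACKs (and their source-FIFO predecessors) are also BRB-delivered at each correct process. Hence every \textbf{wait} step in brb\_deliver eventually completes, and $m$ is inserted into \texttt{pending}. This needs an induction on the real-time order in which $p_i$ completed its own processing. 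The induction is sound because the target of each ACK had finished processing at $p_i$ before the ACK was issued, which is the same well-foundedness used in Theorem~\ref{th:n3_acyclic}. Theorem~\ref{th:n3_bcrb_agreement_standalone} then gives eventual BCRB-delivery at all correct processes.

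\textbf{Agreement.} If a correct $p_i$ BCRB-delivers $m$, then $m$ was in $p_i$'s \texttt{pending} set and is not an ACK, since the check\_delivery pseudo-code triggers bcrb\_deliver only for non-ACK payloads. Theorem~\ref{th:n3_bcrb_agreement_standalone} then applies directly.

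\textbf{Integrity.} Each sender's BRB sequence numbers are shared by its ACKs and its application messages. $\mathbf{V}_i[j]$ is set to $sn$ upon processing $(j,sn)$ and never decreases, so $fifo\_ok$ fails for any re-processing of $(j,sn)$. BRB Integrity also guarantees that $(j,sn)$ is BRB-delivered at most once. Together these give at-most-once BCRB-delivery. Authenticity for a correct sender follows from BRB Integrity, exactly as in Section~\ref{sec:vai}. A Byzantine process invoking brb\_broadcast directly cannot create a message attributed to a correct sender. Nor can it turn an ACK into an application delivery, because only non-ACK payloads are delivered upward.

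I expect the main obstacle to be Validity. One must rule out a Byzantine process stalling a correct sender's stream by inducing unbounded waits in the ACK-handling branch. The argument must show that a correct sender's ACKs reference only messages whose entire FIFO prefix has completed processing at that sender. By BRB Agreement, that prefix then completes processing at every correct process. Any dependencies contributed by Byzantine ACKs that never complete affect only the Byzantine sender's own stream. They never block the correct sender's messages, since a Byzantine ACK whose wait never finishes adds no edge.
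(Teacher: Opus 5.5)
Your proposal is correct and follows the paper's proof. Validity comes from BRB Validity, insertion into \texttt{pending}, and Theorem~\ref{th:n3_bcrb_agreement_standalone}; Agreement comes directly from that theorem; and Integrity comes from the monotonicity of $\mathbf{V}$ together with BRB Integrity. Your induction showing that the ACK-branch \textbf{wait} steps complete for a correct sender's prefix, and your remark that only non-ACK payloads are delivered upward, fill in details the paper leaves implicit without changing the route.
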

\begin{proof}
Follows from Theorem~\ref{th:n3_bcrb_agreement_standalone} and the properties of the underlying BRB layer:
\begin{itemize}
    \item \textbf{Validity:} If a correct process $p_i$ BCRB-broadcasts $m$, by Validity of the underlying BRB layer, $m$ is BRB-delivered to each correct process. It will be inserted into $\texttt{pending}$ at all correct processes as $p_i$ must have BCRB-broadcast all messages with lower sequence numbers in order, which will also be BRB-delivered in order and placed in their $\texttt{pending}$ at all correct processes. By Theorem~\ref{th:n3_bcrb_agreement_standalone}, $m$ is eventually BCRB-delivered by every correct process.
    \item \textbf{Agreement:} If a correct process $p_i$ BCRB-delivers $m$, $m$ was BRB-delivered and placed in $\texttt{pending}$ at $p_i$. By Theorem~\ref{th:n3_bcrb_agreement_standalone}, $m$ is eventually BCRB-delivered by every correct process.
    \item \textbf{Integrity:} Integrity consists of two properties: (1) at-most-once delivery, and (2) authenticity (a correct process only delivers a message from a correct sender if that sender actually broadcast it).
    For at-most-once delivery: Upon satisfying the delivery checks in \texttt{check\_delivery()}, process $p_k$ removes the message from its \texttt{pending} set, updates its local sequence vector $\mathbf{V}_k[j] \gets sn$, and triggers \texttt{bcrb\_deliver}. Since $\mathbf{V}_k[j]$ is strictly incremented to $sn$, the FIFO check $fifo\_ok$ ($\mathbf{V}_k[j] = sn - 1$) will evaluate to \texttt{False} for any duplicate sequence number, preventing duplicate deliveries.
    For authenticity: If the sender $p_j$ is correct, it obeys the protocol and only sends messages via \texttt{bcrb\_broadcast}. By the Integrity property of the underlying BRB layer, no correct process can BRB-deliver a message $((j, sn), ciphertext)$ unless $p_j$ actually invoked \texttt{brb\_broadcast}. Even if a Byzantine process attempts to directly invoke the lower-level \texttt{brb\_broadcast} to bypass the BCRB layer, it cannot forge messages from the correct process $p_j$ because the underlying BRB layer guarantees sender authenticity (e.g., using signatures or authenticated point-to-point channels). Therefore, no correct process ever BCRB-delivers a forged message from a correct sender.
\end{itemize}
\end{proof}

Observe that the above theorems~\ref{th:n3_acyclic},~\ref{th:n3_weak_safety},~\ref{th:n3_ordering},~\ref{th:n3_bcrb_agreement_standalone},~\ref{th:vai_n3} also hold for Algorithm~\ref{alg:bcrb_non_crypto_algorithm_n3}.

Algorithm~\ref{alg:bcrb_algorithm_n3} has the same probability of strong safety violation $\mathcal{O}\left(\frac{\ln^2 f}{f}\right)$ as Algorithm~\ref{alg:bcrb_algorithm} because the same analysis as in Theorem~\ref{th:ssformulation} holds and $P(B<A)$ is the same as in Theorems~\ref{th:bla} and ~\ref{th:ws3+analysis}.

Algorithms~\ref{alg:bcrb_non_crypto_algorithm},~\ref{alg:bcrb_non_crypto_algorithm_n3} are subject to strong safety violations due to front-running attacks as well as $P(B<A)$ (the same as in Theorems~\ref{th:bla} and ~\ref{th:ws3+analysis}) 
being greater than 0.

\end{document}